\newcommand{\xmark}{\ding{55}}
\newcommand{\nptextcite}[1]{\cite{#1}}
\pretocmd{\thmt@rst@storecounters}{\Hy@SaveLastskip}{}{}
\apptocmd{\thmt@rst@storecounters}{\Hy@RestoreLastskip}{}{}
\renewcommand{\fnum@figure}{Fig.~\thefigure}
\theoremstyle{definition}
\newtheorem{definition}{Definition}
\newtheorem{example}[definition]{Example}
\theoremstyle{plain}
\newtheorem{theorem}[definition]{Theorem}
\newtheorem{lemma}[definition]{Lemma}
\renewcommand{\epsilon}{\varepsilon}
\let\oldphi\phi
\let\oldvarphi\varphi
\renewcommand{\varphi}{\oldphi}
\renewcommand{\phi}{\oldvarphi}
\newcommand{\inc}{\hspace{-.05em}\raisebox{.4ex}{\tiny\bf ++}}
\newcommand{\VV}{\mathcal{V}}
\newcommand{\ZZ}{\mathbb{Z}}
\newcommand{\NN}{\mathbb{N}}
\newcommand{\RR}{\mathbb{R}}
\newcommand{\TT}{\mathcal{T}}
\newcommand{\LL}{\mathcal{L}}
\renewcommand{\AA}{\mathbf{A}}
\newcommand{\true}{\textup{\textbf{true}}}
\newcommand{\false}{\textup{\textbf{false}}}
\newcommand{\Bool}{\ensuremath{\textbf{Bool}}}
\newcommand{\Int}{\ensuremath{\textbf{Int}}}
\newcommand{\sem}[1]{\left\llbracket #1 \right\rrbracket}
\newcommand{\EIA}{\ensuremath{\text{EIA}}\xspace}
\newcommand{\NRAT}{\ensuremath{\text{NRAT}}\xspace}
\newcommand{\SC}[1]{\textnormal{\scshape #1}}
\let\deriv\partial
\renewcommand{\partial}{\SC{p}}
\newcommand{\NIA}{\ensuremath{\text{NIA}}\xspace}
\newcommand{\LIA}{\ensuremath{\text{LIA}}\xspace}
\newcommand{\sat}{\textup{\textbf{sat}}}
\newcommand{\unsat}{\textup{\textbf{unsat}}}
\newcommand{\unknown}{\textup{\textbf{unknown}}}
\renewcommand{\exp}{\textsf{exp}}
\renewcommand{\mod}{\mathrel{\textsf{mod}}}
\renewcommand{\div}{\mathrel{\textsf{div}}}
\newcommand{\divisible}{\textsf{divisible}}
\newcommand{\tool}[1]{{{\sf{#1}}}}
\newcommand{\swine}{\tool{SwInE}\xspace}
\newcommand{\swinezzz}{\tool{SwInE-Z3}\xspace}
\newcommand{\swinez}{\tool{SwInE-Z3}\xspace}
\newcommand{\swinel}{\tool{SwInE-Legacy}\xspace}
\newcommand{\twoldots}{%
  \mathinner{{\ldotp}{\ldotp}}%
}
\newcommand{\efrac}[2]{%
  \mathchoice
    {\ooalign{%
      $\genfrac{}{}{1.6pt}0{\hphantom{#1}}{\hphantom{#2}}$\cr%
      $\color{white}\genfrac{}{}{1pt}0{\color{black}#1}{\color{black}#2}$}}%
    {\ooalign{%
      $\genfrac{}{}{1.6pt}1{\hphantom{#1}}{\hphantom{#2}}$\cr%
      $\color{white}\genfrac{}{}{1pt}1{\color{black}#1}{\color{black}#2}$}}%
    {\ooalign{%
      $\genfrac{}{}{1.6pt}2{\hphantom{#1}}{\hphantom{#2}}$\cr%
      $\color{white}\genfrac{}{}{1pt}2{\color{black}#1}{\color{black}#2}$}}%
    {\ooalign{%
      $\genfrac{}{}{1.6pt}3{\hphantom{#1}}{\hphantom{#2}}$\cr%
      $\color{white}\genfrac{}{}{1pt}3{\color{black}#1}{\color{black}#2}$}}%
}
\newcommand{\Def}{\mathrel{\mathop:}=}
\newcommand{\mDo}{\mathbf{do}}
\newcommand{\mDone}{\mathbf{done}}
\newcommand{\mWhile}[2]{\mathbf{while}\ #1\ \mDo\ #2\ \mDone}
\newcommand{\assign}{\leftarrow}
\newcommand{\ip}[2]{\mathrm{ip}_{#1}^{[{#2}^\pm]}} 
\newcommand{\ipc}[3]{\mathrm{ip}_{#1}^{[{#2} \twoldots {#3}]}} 
\newcommand{\ipf}[2]{\mathrm{ip}^{[{#1}^\pm][{#2}^\pm]}}
\newcommand{\ipfc}[4]{\mathrm{ip}^{[{#1} \twoldots {#2}][{#3} \twoldots {#4}]}}
\newcommand{\omitproof}[1]{}
\renewcommand{\emptyset}{\varnothing}
\crefname{algorithm}{alg.}{algorithms}%
\crefname{equation}{eq.}{equations}%
\crefname{chapter}{chapter}{chapters}%
\crefname{section}{sect.}{sections}%
\crefname{appendix}{app.}{appendices}%
\crefname{enumi}{item}{items}%
\crefname{footnote}{footnote}{footnotes}%
\crefname{figure}{fig.}{figures}%
\crefname{table}{table}{tables}%
\crefname{theorem}{thm.}{theorems}%
\crefname{lemma}{lemma}{lemmas}%
\crefname{corollary}{cor.}{corollaries}%
\crefname{proposition}{proposition}{propositions}%
\crefname{definition}{def.}{definitions}%
\crefname{result}{result}{results}%
\crefname{example}{ex.}{examples}%
\crefname{remark}{remark}{remarks}%
\crefname{note}{note}{notes}%
\crefname{invariant}{invariant}{invariants}%
\begin{document}

\title{Satisfiability Modulo Exponential Integer Arithmetic}
\author{Florian Frohn}
\orcid{http://orcid.org/0000-0003-0902-1994}
\email{florian.frohn@informatik.rwth-aachen.de}
\affiliation{
  \institution{RWTH Aachen University}
  \city{Aachen}
  \country{Germany}
}
\author{Jürgen Giesl}
\orcid{http://orcid.org/0000-0003-0283-8520}
\email{giesl@informatik.rwth-aachen.de}
\affiliation{
  \institution{RWTH Aachen University}
  \city{Aachen}
  \country{Germany}
}

\renewcommand{\shortauthors}{F.\ Frohn, J.\ Giesl}

\begin{abstract}
  {\bf Background:}
  SMT solvers use sophisticated techniques for polynomial (linear or non-linear) integer arithmetic.
  In contrast, non-polynomial integer arithmetic has mostly been neglected so far.
  However, in the context of program verification, polynomials are often insufficient to capture the behavior of the analyzed system without resorting to approximations.
  In particular, exponentials are often required to facilitate precise reasoning (without approximations).

  {\bf Objectives:}
  We introduce \EIA, an SMT theory for integer arithmetic with exponentiation, as well as a technique to analyze satisfiability of \EIA problems.
  Moreover, we provide a set of \EIA benchmarks and an implementation of our approach.

  {\bf Methods:}
  In the last years, \emph{incremental linearization} has been applied successfully to satisfiability modulo real arithmetic with transcendental functions.
  We adapt this approach to an extension of polynomial integer arithmetic with exponential functions.
  Here, the key challenge is to compute suitable \emph{lemmas} that eliminate the current model from the search space if it violates the semantics of exponentiation.
  We implemented our approach in the novel tool \swine.

  {\bf Results:}
  We evaluate our approach empirically on several sets of benchmarks from different domains:
  Most of them stem from program verification, and one set stems from recurrence solving.
  On all sets, our approach is highly effective in practice.

  {\bf Conclusions:}
  We develop a novel approach for  satisfiability modulo exponential integer
  arithmetic.
  With \swine, we provide an SMT-LIB compliant open-source solver for \EIA
that demonstrates that our approach is effective in practice. 
  In this way, we hope to attract users with applications that give rise to challenging benchmarks, and we hope that other solvers with support for integer exponentiation will follow, with the ultimate goal of standardizing \EIA.

\end{abstract}

\maketitle

\section{Introduction}
\label{sec:intro}

Traditionally, automated reasoning techniques for integers focus on polynomial arithmetic.
This is not only true in the context of SMT, but also for program verification techniques, since the latter often search for polynomial invariants that imply the desired properties.
As invariants are over-approximations, they are well suited for proving ``universal'' properties like safety, termination, or upper bounds on the worst-case runtime that refer to all possible program runs.
However, proving dual properties like unsafety, non-termination, or lower bounds requires under-approximations, so that invariants are of limited use here.

For lower bounds, an \emph{infinite set} of witnesses is needed, as the runtime w.r.t.\ a finite set of (terminating) program runs is always bounded by a constant.
Thus, to prove non-constant lower bounds, \emph{symbolic under-approximations} are required, i.e., formulas that describe an infinite subset of the reachable states.
However, polynomial arithmetic is often insufficient to express such approximations.
To see this, consider the program
\[
	x \assign 1;\ y \assign \texttt{nondet}(0,\infty);\ \mWhile{y > 0}{x \assign 3 \cdot x;\ y \assign y - 1}
\]
where $\texttt{nondet}(0,\infty)$ returns a natural number non-deterministically.
Here, the set of reachable states after execution of the loop is characterized by the formula
\begin{equation}
	\label{eq:exp}
	\exists n \in \NN.\ x = 3^n \land y = 0.
\end{equation}
In recent work, \emph{acceleration techniques} have successfully been used to deduce lower runtime bounds automatically \cite{loat,journal}.
While they can easily derive a formula like \eqref{eq:exp} from the code above, this is of limited use, as most\footnote{\tool{CVC5} uses a dedicated solver for \href{https://github.com/cvc5/cvc5/blob/a287aabdadbead4b5bbebaa4c57818cc4c3f207e/src/theory/arith/nl/pow2_solver.h}{integer exponentiation with base $2$}.}
SMT solvers cannot handle terms of the form $3^n$.
Besides lower bounds, acceleration has also successfully been used for proving non-termination \cite{adcl-nt,fmcad19,loat} and (un)safety \cite{flata,underapprox15,fast,bozga10,jeannet14,adcl}, where its strength is finding long counterexamples that are challenging for other techniques.

Importantly, exponentiation is not just ``yet another function'' that can result from applying acceleration techniques.
There are well-known, important classes of loops where polynomials and exponentiation \emph{always} suffice to represent the values of the program variables after executing a loop \cite{FMSD23,cav19}.
Thus, the lack of support for integer exponentiation in SMT solvers is a major obstacle for the further development of acceleration-based verification techniques.

In this work, we first define a novel SMT theory for integer arithmetic with exponentiation.
Then we show how to lift standard SMT solvers to this new theory, resulting in our novel tool \swine (\underline{S}MT \underline{w}ith \underline{In}teger \underline{E}xponentiation).

Our technique is inspired by \emph{incremental linearization}, which has been applied successfully to \emph{real}
arithmetic with transcendental functions, including the natural exponential function $\exp_e(x) = e^x$, where $e$ is Euler's number \cite{cimatti18a}.
In this setting, incremental linearization considers $\exp_e$ as an uninterpreted function.
If the resulting SMT problem is unsatisfiable, then so is the original problem.
If it is satisfiable and the model that was found for $\exp_e$ coincides with the semantics of exponentiation, then the original problem is satisfiable.
Otherwise, \emph{lemmas} about $\exp_e$ that rule out the current model are added to the SMT problem, and then its satisfiability is checked again.
The name ``incremental linearization'' is due to the fact that these lemmas only contain linear arithmetic.

The main challenge for adapting this approach to integer exponentiation is to generate suitable lemmas, see \Cref{sec:linearization}.
Except for \emph{monotonicity lemmas}, none of the lemmas of \citeauthor{cimatti18a} easily carry over to our setting.
In contrast to \citeauthor{cimatti18a}, we do not restrict ourselves to linear lemmas, but we also use non-linear, polynomial lemmas.
This is due to the fact that we consider a binary version $\lambda x, y.\ x^y$ of exponentiation, whereas \citeauthor{cimatti18a} fix the base to $e$.
Thus, in our setting, one obtains \emph{bilinear} lemmas that are linear w.r.t.\ $x$ as well as $y$, but may contain multiplication between $x$ and $y$ (i.e., they may contain the subterm $x \cdot y$).
More precisely, bilinear lemmas arise from \emph{bilinear interpolation}, which is a crucial ingredient of our approach, as it allows us to eliminate \emph{any} model that violates the semantics of exponentiation (\Cref{ProgressTheorem}).
Therefore, the name ``incremental linearization'' does not fit to our approach, which is rather an instance of ``counterexample-guided abstraction refinement'' (CEGAR, \nptextcite{Clarke2000}).

A preliminary shorter version of this paper appeared at IJCAR 2024 \cite{conference}.
The current paper contains the following, yet unpublished contributions:
\begin{itemize}
\item We present two new kinds of lemmas, called \emph{prime lemmas} and \emph{induction lemmas}.
\item We present a new technique called \emph{phasing}, which alternates between a \emph{\sat-phase} where the solver searches for a model within a restricted part of the search space, and an \emph{\unsat-phase} where the solver aims for proving unsatisfiability.
\item While
 \textcite{conference} 
  did not give any proofs, we now present proofs for all lemmas and theorems.
\item We present a reimplementation of \swine, called \swinezzz in the sequel, which directly builds on top of the API of the SMT solver \tool{Z3} \cite{z3}.
  In contrast, the original implementation of  \textcite{conference} 
  was built on top of \tool{SMT-Switch} \cite{smt-switch}, a library that offers a unified interface for various SMT solvers.
  However, the evaluation of \textcite{conference} showed that \tool{Z3} performs best in our setting, so that using its API directly allows for improving robustness.\footnote{For example, \swine uses the parser of \tool{SMT-Switch}, which strictly adheres to the SMT-LIB standard.
  In contrast, \tool{Z3} (and thus \swinezzz) can often also parse inputs that violate the SMT-LIB standard in a way that does not introduce ambiguity.}
\item We present a novel collection of challenging unsatisfiable benchmarks, which were obtained from verifying solutions for recurrence relations.
\item We improve and extend our evaluation by including our novel benchmarks and the tools \swinezzz and \tool{Z3}.
  In order to use \tool{Z3} for integer exponentiation, we axiomatize the semantics of exponentiation via universally quantified assertions.
\end{itemize}
To summarize, our contributions are as follows: We first propose the new SMT theory $\EIA$ for integer arithmetic with exponentiation (\Cref{sec:theories}).
Then, based on novel techniques for generating suitable lemmas, we develop a CEGAR approach for \EIA (\Cref{sec:solving}).
Next, we show how to improve our approach via phasing (\Cref{sec:phasing}).
After discussing related work (\Cref{sec:related}), we discuss two open-source implementations of our approach, one on top of
\tool{SMT-Switch} \cite{swine-github}, and one on top of \tool{Z3} \cite{swine-z3-github}, and we
present an experimental evaluation on several collections of \EIA benchmarks that we synthesized from
verification problems and from recurrence solving.
Our experiments show that our approach is highly effective in practice (\Cref{sec:evaluation}).
We refer to
App.\ \ref{sec:MissingProofs} for those (parts of the) proofs that were omitted
from the main part of the paper.

\section{Preliminaries}
\label{sec:preliminaries}

We are working in the setting of \emph{SMT-LIB logic} \cite{smtlib}, a variant of many-sorted first-order logic with equality.
We now introduce a reduced variant of SMT-LIB logic, where we only explain those concepts that are relevant for our work.

In SMT-LIB logic, there is a dedicated Boolean sort \Bool, and hence formulas are just terms of sort \Bool.
Similarly, there is no distinction between predicates and functions, as predicates are
simply functions with results of sort \Bool. 

So in SMT-LIB logic, a \emph{signature} $\Sigma = (\Sigma^S,\Sigma^F,\Sigma^R)$ consists of a set $\Sigma^S$ of \emph{sorts}, a set $\Sigma^F$ of \emph{function symbols}, and a \emph{ranking function} $\Sigma^R: \Sigma^F \to (\Sigma^S)^+$.
The meaning of $\Sigma^R(f) = (s_1,\ldots,s_k)$ is that $f$ is a function which maps arguments of the sorts $s_1,\ldots,s_{k-1}$ to a result of sort $s_k$.
We write $f: s_1\ \ldots\ s_k$ instead of ``$f \in \Sigma^F$ and $\Sigma^R(f) = (s_1,\ldots,s_k)$'' if $\Sigma$ is clear from the context.
We always allow to implicitly extend $\Sigma$ with arbitrarily many constant function symbols (i.e., function symbols $x$ where $|\Sigma^R(x)| = 1$).
Note that SMT-LIB logic only considers closed terms, i.e., terms without free variables, and we are only concerned with quantifier-free formulas, so in our setting, all formulas are ground.
Therefore, we refer to these constant function symbols as \emph{variables} to avoid
confusion with other, predefined constant function symbols like $\true, 0, 1, \ldots$, see below.

Every SMT-LIB signature is an extension of $\Sigma_\Bool$ where $\Sigma^S_\Bool = \{\Bool\}$ and $\Sigma^F_\Bool$ consists of the following function symbols:
\[
	\true, \false: \Bool \qquad \neg: \Bool\ \Bool \qquad \land, \lor, \implies, \iff: \Bool\ \Bool\ \Bool
\]
Note that SMT-LIB logic only considers well-sorted terms.
A \emph{$\Sigma$-structure} $\AA$ consists of a \emph{universe} $A = \bigcup_{s \in \Sigma^S} A_s$ and an \emph{interpretation function} that maps each function symbol $f: s_1\ \ldots\ s_k$ to a function $\sem{f}^{\AA}: A_{s_1} \times \ldots \times A_{s_{k-1}}
	\to A_{s_k}$.
SMT-LIB logic only considers structures where $A_\Bool = \{\true,\false\}$ and all function symbols from $\Sigma_\Bool$ are interpreted as usual.

A \emph{$\Sigma$-theory} is a class of $\Sigma$-structures.
For example, consider the extension $\Sigma_\Int$ of $\Sigma_\Bool$ with the additional sort $\Int$ and the following function symbols:
\begin{align*}
  c: &\ \Int && \text{for all } c \in \NN\\
  +,-,\cdot,\div,\mod:&\ \Int\ \Int\ \Int\\
  <, \leq, >, \geq, =, \neq:&\ \Int\ \Int\ \Bool\\
  \divisible_c:&\ \Int\ \Bool && \text{for all } c \in \NN_{>0} = \{n \in \NN \mid n > 0\}
\end{align*}
Then the $\Sigma_\Int$-theory \emph{non-linear integer arithmetic} (\NIA)\footnote{As we only consider quantifier-free formulas, we omit the prefix ``QF\_'' in theory names and write, e.g., \NIA instead of QF\_NIA.
  \textcite{smtlib} call QF\_NIA an \emph{SMT-LIB logic}, which restricts the (first-order) \emph{theory} of integer arithmetic to the quantifier-free fragment.
  For simplicity, we do not distinguish between SMT-LIB logics and theories.} contains all $\Sigma_\Int$-structures $\AA$ where $A_\Int = \ZZ$,
\[
  \sem{\divisible_c}^\AA(x) =
  \begin{cases}
    \true & \text{if } x \mod c = 0 \\
    \false & \text{otherwise},
  \end{cases}
\]
and all other symbols from $\Sigma_\Int$ are interpreted as usual.\footnote{Note that all functions are total in SMT-LIB logic.
  So $\div$ and $\mod$ are interpreted as integer division and modulo, respectively, if their second argument is non-zero.
  If the second argument is zero, then the interpretation of $\div$ and $\mod$ is arbitrary, i.e., $\lambda x.\ x \div 0$ and $\lambda x.\ x \mod 0$ are treated like un\-interpreted functions.}

If $\AA$ is a $\Sigma$-structure and $\Sigma'$ is a subsignature of $\Sigma$, then the \emph{reduct} of $\AA$ to $\Sigma'$ is the unique $\Sigma'$-structure that interprets its function symbols like $\AA$.
So the theory \emph{linear integer arithmetic} (\LIA) consists of the reducts of all
elements of \NIA to $\Sigma_\LIA = \Sigma_\Int \setminus \{\cdot,\div,\mod\}$.\footnote{Note that $\Sigma_\LIA$ indeed contains the predicates $\divisible_c$, as \LIA would not admit quantifier elimination without them.
  To see this, consider the formula $\phi \Def \exists y.\ x = y + y$.
  It is equivalent to $\divisible_2(x)$, but there is no quantifier-free \LIA formula without divisibility predicates which is equivalent to $\phi$.}

Given a $\Sigma$-structure $\AA$ and a $\Sigma$-term $t$, the \emph{meaning} $\sem{t}^\AA$ of $t$ results from interpreting all function symbols according to $\AA$.
For function symbols $f$ whose interpretation is fixed by a $\Sigma$-theory $\TT$, we denote $f$'s interpretation by $\sem{f}^\TT$.
Given a $\Sigma$-theory $\TT$, a $\Sigma$-formula $\phi$ (i.e., a $\Sigma$-term of sort $\Bool$) is \emph{satisfiable in $\TT$} if there is an $\AA \in \TT$ such that $\sem{\phi}^\AA = \true$.
Then $\AA$ is called a \emph{model} of $\phi$, written $\AA \models \phi$.
If \emph{every} $\AA \in \TT$ is a model of $\phi$, then $\phi$ is \emph{$\TT$-valid}, written $\models_\TT \phi$.
We write $\psi \equiv_\TT \phi$ for $\models_\TT \psi \iff \phi$.

We sometimes also consider \emph{uninterpreted functions}.
Then the signature may not only contain the function symbols of the theory under consideration and variables, but also additional non-constant function symbols.

We write ``term'', ``structure'', ``theory'', $\ldots$ instead of ``$\Sigma$-term'', ``$\Sigma$-struc\-ture'', ``$\Sigma$-theory'', $\ldots$ if $\Sigma$ is irrelevant or clear from the context.
Similarly, we just write ``$\equiv$'' and ``valid'' instead of ``$\equiv_\TT$'' and ``$\TT$-valid'' if $\TT$ is clear from the context.
Moreover, we use unary minus and $t^c$ (where $t$ is a term of sort $\Int$ and $c \in
\NN$) as syntactic sugar, and we write binary function symbols in
infix notation.

In the sequel, we use $x,y,z,\ldots$ for variables, $s,t,p,q,\ldots$ for terms of sort $\Int$, $\phi, \psi,\ldots$ for formulas, and $a,b,c,d, \ldots$ for integers.

\section{The SMT Theory \EIA}
\label{sec:theories}

We now introduce our novel SMT theory for \emph{exponential integer arithmetic}.
To this end, we define the signature $\Sigma_\Int^{\exp}$, which extends $\Sigma_\Int$ with
\[
	\exp: \Int\ \Int\ \Int.
\]
If the $2^{nd}$ argument of $\exp$ is non-negative, then its semantics is as expected, i.e., we are interested in structures $\AA$ such that $\sem{\exp}^\AA(c,d) = c^d$ for all $d \geq 0$.
However, if the $2^{nd}$ argument is negative, then we have to use different semantics.
The reason is that we may have $c^d \notin \ZZ$ if $d < 0$.
Intuitively, $\exp$ should be a partial function, but all functions are total in SMT-LIB logic.
We solve this problem by interpreting $\exp(c,d)$ as $c^{|d|}$.
This semantics has previously been used in the literature, and the resulting logic admits a known decidable fragment \cite{LIAE}.

\begin{definition}[\EIA]
	\label{EIAT}
	The theory \emph{exponential integer arithmetic (\EIA)}
	contains all \normalfont{$\Sigma_\Int^{\exp}$}-structures $\AA$ with
        \[
          \normalfont{\sem{\exp}^\AA(c,d) = c^{|d|}}
        \]
          whose reduct to \normalfont{$\Sigma_\Int$} is in $\NIA$.
\end{definition}
Alternatively, one could treat $\exp(c,d)$ like an uninterpreted function if $d$ is negative.
Doing so would be analogous to the treatment of division by zero in SMT-LIB logic.
Then, e.g., $\exp(0,-1) \neq \exp(0,-2)$ would be satisfied by a structure $\AA$ with $\sem{\exp}^\AA(c,d) = c^d$ if $d \geq 0$ and $\sem{\exp}^\AA(c,d) = d$, otherwise.
However, the drawback of this approach is that important laws of exponentiation like
\[
	\exp(\exp(x,y), z) = \exp(x,y \cdot z)
\]
would not be valid.
Thus, we focus on the semantics from \Cref{EIAT}.

\section{Solving \EIA Problems via CEGAR}
\label{sec:solving}

\begin{algorithm}[t]
	\caption{CEGAR for $\EIA$}\label{alg}
	\KwIn{a $\Sigma_\Int^\exp$-formula $\phi$}
	\tcp{Preprocessing}
	\Do{$\phi \neq \phi'$\label{preprocess}}{
		$\phi' \gets \phi$\;
		$\phi \gets \SC{FoldConstants}(\phi)$\;\label{fold}
		$\phi \gets \SC{Rewrite}(\phi)$\;\label{rewrite}
	}
	\tcp{Refinement Loop}
	\While{there is a \NIA-model $\AA$ of $\phi$ \label{counterexample}}{
		\If{$\AA$ is a counterexample}{
		  $\LL \gets \emptyset$\;\label{refine0}
			\For{$\mathrm{k}\!\in\!\{ \text{Symmetry}, \text{Monotonicity}, \text{Bounding}, \text{Prime}, \text{Induction}, \text{Interpolation}\}$ \label{refine1}}{
				$\LL \assign \LL \cup \SC{ComputeLemmas}(\phi, \mathrm{k})$\;
			}
			$\phi \gets \phi \land \bigwedge \{\psi \in \LL \mid \AA \not\models \psi\}$\label{refine2}
		}
		\lElse{
			\Return{$\sat$}
		}
	}
	\Return{$\unsat$}
\end{algorithm}
We now explain our technique for solving \EIA problems, see Alg.~\ref{alg}.
Our goal is to (dis)prove satisfiability of $\phi$ in $\EIA$.
The loop in Line \ref{counterexample} is a CEGAR loop which lifts an SMT solver for \NIA (which is called in Line \ref{counterexample}) to \EIA.
So the \emph{abstraction} consists of using \NIA- instead of \EIA-models.
Hence, $\exp$ is considered to be an uninterpreted function in Line~\ref{counterexample}, i.e., the SMT solver also searches for an interpretation of $\exp$.
If the model found by the SMT solver is a \emph{counterexample}
(i.e., if $\sem{\exp}^\AA$ conflicts with $\sem{\exp}^{\EIA}$), then the formula under consideration is refined by adding suitable lemmas in Lines \ref{refine1} -- \ref{refine2} and the loop is iterated again.
\begin{definition}[Counterexample]
	\label{def:counterexample}
	We call a \NIA-model $\AA$ of $\phi$ a \emph{counterexample} if there is a subterm \normalfont{$\exp(s,t)$} of $\phi$ such that \normalfont{$\sem{\exp(s,t)}^\AA \neq (\sem{s}^{\AA})^{|\sem{t}^{\AA}|}$}.
\end{definition}
In the sequel, we first discuss our preprocessings (first loop in Alg.~\ref{alg}) in \Cref{sec:preprocessings}.
Then we explain our refinement (Lines \ref{refine0} -- \ref{refine2}) in \Cref{sec:linearization}.
Here, we first introduce the different kinds of lemmas that are used by our implementation in \Cref{subsec:sym} -- \ref{sec:interpolation}.
If implemented naively, the number of lemmas can get quite large, so we explain how to generate lemmas \emph{lazily} in \Cref{subsec:lazy}.
Finally, we conclude this section by stating important properties of Alg.~\ref{alg}.
\begin{example}[Leading Example]
	\label{ex:leading}
	To illustrate our approach, we show how to prove
	\begin{equation}
          \label{eq:leading}
		\forall x,y.\ |x| > 2 \land |y| > 2 \implies \exp(\exp(x,y),y) \neq \exp(x,\exp(y,y)).
	\end{equation}
        Here, we also consider negative values for $x$ and $y$ (i.e., we use the premise $|x| > 2 \land |y| > 2$ instead of $x > 2 \land y > 2$) to illustrate how our techniques works when $\exp$ is used with negative arguments.
	To prove \eqref{eq:leading}, we encode absolute values suitably\footnote{We tested several encodings, but surprisingly, this non-linear encoding worked best.} and prove unsatisfiability of its negation:
	\[
		x^2 > 4 \land y^2 > 4 \land \exp(\exp(x,y),y) = \exp(x,\exp(y,y))
	\]
\end{example}

\subsection{Preprocessings}
\label{sec:preprocessings}

In the first loop of Alg.~\ref{alg}, we preprocess $\phi$ by alternating \emph{constant folding} (Line \ref{fold}) and \emph{rewriting} (Line \ref{rewrite}) until a fixpoint is reached.
Constant folding evaluates subexpressions without variables, where subexpressions $\exp(c,d)$ are evaluated to $c^{|d|}$, i.e., according to the semantics of \EIA.
Rewriting reduces the number of occurrences of $\exp$ via the following (terminating) rewrite rules:
\begin{align*}
	\exp(x,c)                 & {} \to x^{|c|} \qquad\qquad\qquad \text{if $c \in \ZZ$} \\
	\exp(\exp(x,y),z)         & {} \to \exp(x, y \cdot z)                               \\
	\exp(x,y) \cdot \exp(z,y) & {} \to \exp(x \cdot z, y)
\end{align*}
In particular, the $1^{st}$ rule allows us to rewrite\footnote{Note that we have $\sem{\exp(0,0)}^\EIA = 0^0 = 1$.} $\exp(s,0)$ to $s^0 = 1$ and $\exp(s,1)$ to $s^1 = s$.
Note that the rule
\[
	\exp(x,y) \cdot \exp(x,z) \to \exp(x,y+z)
\]
would be unsound, as the right-hand side would need to be $\exp(x,|y|+|z|)$ instead.
\begin{example}[Preprocessing]
	\label{ex:preprocessing}
	For our leading example, applying the $2^{nd}$ rewrite rule at the underlined position yields:
	\begin{align}
		          & x^2 > 4 \land y^2 > 4 \land \underline{\exp(\exp(x,y),y)} = \exp(x,\exp(y,y)) \nonumber \\
		{} \to {} & x^2 > 4 \land y^2 > 4 \land \exp(x,y^2) = \exp(x,\exp(y,y)) \label{eq:goal}
	\end{align}
\end{example}
\begin{restatable}{lemma}{preprocsound}
	\label{lem:preproc}
	We have $\phi \equiv_{\normalfont{\EIA}} \SC{FoldConstants}(\phi)$ and $\phi \equiv_{\normalfont{\EIA}} \SC{Rewrite}(\phi)$.
\end{restatable}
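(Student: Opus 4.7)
The plan is to establish both equivalences by a standard compositional (congruence) argument: it suffices to show that each elementary transformation step---either a single constant-folding replacement or a single application of one of the three rewrite rules---preserves $\sem{\cdot}^\AA$ for every $\AA \in \EIA$. Then both $\phi \equiv_\EIA \SC{FoldConstants}(\phi)$ and $\phi \equiv_\EIA \SC{Rewrite}(\phi)$ follow by induction on the position at which the step is applied, because every symbol of $\Sigma_\Int^{\exp}$ is interpreted compositionally. The same induction also propagates the equivalence through the iterated fixpoint loop in \Cref{alg}.

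For constant folding, every variable-free subterm $t$ has the same value $\sem{t}^\EIA$ in every $\AA \in \EIA$, because the interpretation of all symbols of $\Sigma_\Int^{\exp}$ is fixed by the theory. In particular, $\sem{\exp(c,d)}^\EIA = c^{|d|}$ by \Cref{EIAT}, which matches the replacement performed by \textsc{FoldConstants}. Hence replacing $t$ by the integer constant $\sem{t}^\EIA$ is sound.

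For \textsc{Rewrite}, I would verify each of the three rules individually. Rule~1, $\exp(x,c) \to x^{|c|}$, is immediate from $\sem{\exp(x,c)}^\AA = (\sem{x}^\AA)^{|c|}$ together with the definition of the syntactic sugar $x^{|c|}$. Rule~3, $\exp(x,y) \cdot \exp(z,y) \to \exp(x \cdot z, y)$, uses the elementary identity $a^n \cdot b^n = (a \cdot b)^n$ applied with the non-negative exponent $n = |\sem{y}^\AA|$. Rule~2, $\exp(\exp(x,y),z) \to \exp(x, y \cdot z)$, reduces to the equality $(a^{|b|})^{|c|} = a^{|b \cdot c|}$ for all $a,b,c \in \ZZ$, which in turn follows from the multiplicativity of absolute value, $|b|\cdot|c| = |b \cdot c|$, combined with the ordinary power law $(a^m)^n = a^{m \cdot n}$.

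The main obstacle I anticipate is rule~2 in the edge case where $a = 0$ together with $b = 0$ or $c = 0$, since then both sides of the equality involve the convention $0^0 = 1$ rather than the ordinary power law. A short case distinction handles this: if $c = 0$ both sides reduce to $0^0 = 1$; if $b = 0$ and $c \neq 0$ the left-hand side equals $(0^0)^{|c|} = 1^{|c|} = 1$ while the right-hand side equals $0^{|0 \cdot c|} = 0^0 = 1$; in all remaining cases either $a \neq 0$ or both exponents are positive, so the ordinary power law applies directly. Once each single step is shown to be sound, compositionality of $\sem{\cdot}^\AA$ lifts the result to arbitrary positions in $\phi$, completing the proof of both equivalences.
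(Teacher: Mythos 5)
Your proposal is correct and follows essentially the same route as the paper's proof: soundness of constant folding is immediate because the interpretation of every symbol of $\Sigma_\Int^{\exp}$ is fixed by the theory, and each rewrite rule $\ell \to r$ is verified by showing $\sem{\ell}^\AA = \sem{r}^\AA$ via the power laws and $|b|\cdot|c| = |b\cdot c|$, with compositionality of $\sem{\cdot}^\AA$ lifting the single-step argument to arbitrary positions. Your explicit case distinction for the $0^0$ edge cases in rule~2 is actually more careful than the paper, which applies $(a^m)^n = a^{m\cdot n}$ without comment; your check confirms that the identity still holds under the convention $0^0 = 1$.
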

\begin{proof}
  For constant folding, the claim is trivial, so we only have to show $\phi \equiv_{\normalfont{\EIA}} \SC{Rewrite}(\phi)$.
  To this end, it suffices to show $\sem{\ell}^\AA = \sem{r}^\AA$ for all rewrite rules $\ell \to r$ and all $\AA \in \EIA$.
  Let $\AA \in \EIA$ be arbitrary but fixed and let $\sem{\ldots}$ denote $\sem{\ldots}^\AA$.

  For the first rewrite rule $\exp(x,c) \to x^{|c|}$, recall that $x^{|c|}$ is syntactic sugar for $\overbrace{x \cdot \ldots \cdot x}^{|c| \text{ times}}$, i.e., we have to show $\sem{\exp(x,c)} = \llbracket \overbrace{x \cdot \ldots \cdot x}^{|c| \text{ times}} \rrbracket$.
  We obtain
  \[
    \sem{\exp(x,c)} = \sem{x}^{|\sem{c}|} = \sem{x}^{|c|} = \overbrace{\sem{x} \cdot \ldots \cdot \sem{x}}^{|c| \text{ times}} = \llbracket \overbrace{x \cdot \ldots \cdot x}^{|c| \text{ times}} \rrbracket.
  \]
  The proofs for the remaining rewrite rules are analogous.
  See App.\ \ref{sec:MissingProofs} for the complete proofs.
\end{proof}
\makeproof*{lem:preproc}{
  \preprocsound*
	\begin{proof}
	\omitproof{Constant folding is trivially sound, so we only have to show soundness of our rewrite rules.
		To this end, it suffices to show $\sem{\ell}^\AA = \sem{r}^\AA$ for all rewrite rules $\ell \to r$ and all $\AA \in \EIA$.
		Let $\AA \in \EIA$ be arbitrary but fixed and let $\sem{\ldots}$ denote $\sem{\ldots}^\AA$.

		For the first rewrite rule $\exp(x,c) \to x^{|c|}$, recall that $x^{|c|}$ is syntactic sugar for $\overbrace{x \cdot \ldots \cdot x}^{|c| \text{ times}}$, i.e., we have to show $\sem{\exp(x,c)} = \llbracket \overbrace{x \cdot \ldots \cdot x}^{|c| \text{ times}} \rrbracket$.
		We obtain
		\[
			\sem{\exp(x,c)} = \sem{x}^{|\sem{c}|} = \sem{x}^{|c|} = \overbrace{\sem{x} \cdot \ldots \cdot \sem{x}}^{|c| \text{ times}} = \llbracket \overbrace{x \cdot \ldots \cdot x}^{|c| \text{ times}} \rrbracket.
		\]
        }
               In \Cref{sec:solving}, we already showed the soundness of the first rewrite rule.
		For the second rewrite rule $\exp(\exp(x,y),z) \to \exp(x,y \cdot z)$, we have
                \[
		\begin{array}{rcccccl}
			\sem{\exp(\exp(x,y),z)} & = & \sem{\exp(x,y)}^{|\sem{z}|} & = & \left(\sem{x}^{|\sem{y}|}\right)^{|\sem{z}|} & = &\sem{x}^{|\sem{y}| \cdot |\sem{z}|} \\
			                        & = & \sem{x}^{|\sem{y} \cdot \sem{z}|} & = & \sem{x}^{|\sem{y \cdot z}|} & = & \sem{\exp(x,y \cdot z)}.
		\end{array}
                \]

		For the third rewrite rule $\exp(x,y) \cdot \exp(z,y) \to \exp(x \cdot z, y)$, we have
                \[
		\begin{array}{rcccccl}
			\sem{\exp(x,y) \cdot \exp(z,y)} & = & \sem{\exp(x,y)} \cdot \sem{\exp(z,y)} & = & \sem{x}^{|\sem{y}|} \cdot \sem{z}^{|\sem{y}|} & = & \left(\sem{x} \cdot \sem{z}\right)^{|\sem{y}|} \\
			                                & = & \sem{x \cdot z}^{|\sem{y}|} & = & \sem{\exp(x \cdot z, y)}.
		\end{array}
                \]
	\end{proof}
}

\subsection{Refinement}
\label{sec:linearization}
Our refinement (Lines \ref{refine0} -- \ref{refine2} of Alg.~\ref{alg}) is based on the six kinds of lemmas named in Line \ref{refine1}: \emph{symmetry lemmas}, \emph{monotonicity lemmas}, \emph{bounding lemmas}, \emph{prime lemmas}, \emph{induction lemmas}, and \emph{interpolation lemmas}.
In the sequel, we explain how we compute a set $\LL$ of such lemmas.
Then our refinement conjoins
\[
	\{\psi \in \LL \mid \AA \not \models \psi\}
\]
to $\phi$ in Line \ref{refine2}.
As our lemmas allow us to eliminate \emph{any} counterexample, this set is never empty, see \Cref{ProgressTheorem}.
To compute $\LL$, we consider all terms that are \emph{relevant} for the formula $\phi$.
\begin{definition}[Relevant Terms]
	\label{def:relevant}
	A term $\normalfont{\exp}(s,t)$ is \emph{relevant} if $\phi$ has a subterm of the form $\normalfont{\exp}(\pm s,\pm t)$.
\end{definition}
\begin{example}[Relevant Terms]
	For our leading example \eqref{eq:goal}, the relevant terms are all terms of the form $\exp(\pm x, \pm y^2)$, $\exp(\pm y, \pm y)$, or $\exp(\pm x, \pm \exp(y,y))$.
\end{example}
While the formula $\phi$ is changed in Line \ref{refine2} of Alg.~\ref{alg}, we only conjoin new lemmas to $\phi$, and thus relevant terms can never become irrelevant.
Moreover, by construction our lemmas only contain $\exp$-terms that were already relevant before.
Thus, the set of relevant terms is not changed by our CEGAR loop.

As mentioned in \Cref{sec:intro}, our approach may also compute lemmas with non-linear polynomial arithmetic.
However, our lemmas are linear if $s$ is an integer constant and $t$ is linear for all subterms $\exp(s,t)$ of $\phi$.

\subsubsection{Symmetry Lemmas}
\label{subsec:sym}
\emph{Symmetry lemmas} encode the relation between terms of the form $\exp(\pm s,\pm t)$.
For each relevant term $\exp(s,t)$, the set $\LL$ contains the following symmetry lemmas:
\begin{align}
	\divisible_2(t) \implies {} & \exp(s,t) = \exp(-s,t) \label{lem:symmetry-t-1}
	\tag{\protect{\ensuremath{\SC{sym}_1}}}                                     \\
	\neg\divisible_2(t) \implies {} & \exp(s,t) = -\exp(-s,t) \label{lem:symmetry-t-2}
	\tag{\protect{\ensuremath{\SC{sym}_2}}}                                     \\
	                         & \exp(s,t) = \exp(s,-t) \label{lem:symmetry-t-3}
	\tag{\protect{\ensuremath{\SC{sym}_3}}}
\end{align}
Note that \ref{lem:symmetry-t-1} and \ref{lem:symmetry-t-2}
are just implications, not equivalences, as, for example, $c^{|d|} = (-c)^{|d|}$ does not imply $\divisible_2(d)$ if $c = 0$.
\begin{example}[Symmetry Lemmas]
	\label{ex:symmetry}
	For our leading example \eqref{eq:goal}, the following symmetry lemmas would be considered, among others:
	\begin{align}
		\ref{lem:symmetry-t-1}: &  & \divisible_2(-y) \implies {} & \exp(-y,-y) = \exp(y,-y) \label{ex:sym_a}
		\\
		\ref{lem:symmetry-t-2}: &  & \neg\divisible_2(-y) \implies {} & \exp(-y,-y) = -\exp(y,-y) \label{ex:sym_b}
		\\
		\ref{lem:symmetry-t-3}: &  &                           & \exp(x, \exp(y,y)) = \exp(x, -\exp(y,y)) \label{ex:sym_c}
		\\
		\ref{lem:symmetry-t-3}: &  &                           & \exp(y,y) = \exp(y,-y) \label{ex:sym_d}
	\end{align}
	Note that, e.g., \eqref{ex:sym_a} results from the term $\exp(-y,-y)$, which is relevant (see \Cref{def:relevant}) even though it does not occur in $\phi$.
\end{example}
To prove soundness of our refinement, we have to show that our lemmas are $\EIA$-valid.
\begin{restatable}{lemma}{symmetrysound}
	\label{lem:symmetry}
	Let $s,t$ be terms of sort \normalfont{$\Int$}.
	Then \ref{lem:symmetry-t-1} -- \ref{lem:symmetry-t-3} are \EIA-valid.
\end{restatable}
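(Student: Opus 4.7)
The plan is to fix an arbitrary structure $\AA \in \EIA$ and reduce each lemma to an elementary statement about integer exponentiation, relying on the defining equation $\sem{\exp(s,t)}^\AA = \sem{s}^{|\sem{t}|}$ from \Cref{EIAT}. Writing $c \Def \sem{s}$ and $d \Def \sem{t}$, each of \ref{lem:symmetry-t-1}--\ref{lem:symmetry-t-3} becomes a statement about $c,d \in \ZZ$, and I only need to show that the implication (or equation) holds under $\sem{\cdot}^\AA$.

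First I would dispatch \ref{lem:symmetry-t-3}, which is immediate: $\sem{\exp(s,t)}^\AA = c^{|d|} = c^{|-d|} = \sem{\exp(s,-t)}^\AA$, using that $\sem{-t}^\AA = -d$ and $|{-d}| = |d|$. Next, for \ref{lem:symmetry-t-1}, I would assume $d \mod 2 = 0$, so $d$ is even, hence $|d|$ is even as well; then $(-c)^{|d|} = ((-1)\cdot c)^{|d|} = (-1)^{|d|} c^{|d|} = c^{|d|}$, giving $\sem{\exp(s,t)}^\AA = \sem{\exp(-s,t)}^\AA$. For \ref{lem:symmetry-t-2}, the argument is analogous with $|d|$ odd, yielding $(-1)^{|d|} c^{|d|} = -c^{|d|}$.

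The only mildly delicate point is aligning the SMT-LIB semantics of $\mod$ with the parity of $|d|$. In SMT-LIB, $\mod$ is the Euclidean remainder, so $d \mod 2 \in \{0,1\}$ and $d \mod 2 = 0$ is equivalent to $d$ being even, which in turn is equivalent to $|d|$ being even (and similarly for the odd case). I would state this equivalence explicitly once at the start and then appeal to it in both cases. After that, everything reduces to the well-known identity $(-c)^n = (-1)^n c^n$ for $n \in \NN$, which I would apply with $n = |d|$.

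I expect no genuine obstacle: the proof is a direct calculation via the semantics, and the only place where care is required is the conversion between ``$t \mod 2$'' and the parity of $|t|$. The whole argument can be presented as three short calculations, one per lemma, prefaced by the parity observation about $\mod$.
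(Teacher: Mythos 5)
Your proposal is correct and follows essentially the same route as the paper: fix an arbitrary $\AA \in \EIA$, unfold $\sem{\exp(\cdot,\cdot)}^\AA$ via \Cref{EIAT}, handle \ref{lem:symmetry-t-3} by $|{-d}| = |d|$, and handle \ref{lem:symmetry-t-1} and \ref{lem:symmetry-t-2} by transferring the parity of $\sem{t}$ to $|\sem{t}|$ and using $(-c)^n = (-1)^n c^n$. Your explicit remark on the Euclidean semantics of $\mod$ is a small extra precaution that the paper leaves implicit, but it does not change the argument.
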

\begin{proof}
  Let $\AA \in \EIA$ again be arbitrary but fixed and let $\sem{\ldots}$ denote $\sem{\ldots}^\AA$.
  For \ref{lem:symmetry-t-1}, assume $\sem{\divisible_2(t)} = \true$, i.e., assume that $\sem{t}$ is even.
  Then it remains to show
  \[
    \sem{\exp(s,t) = \exp(-s,t)} = \true, \quad \text{i.e.,} \quad \sem{\exp(s,t)} = \sem{\exp(-s,t)}.
  \]
  We have:
  \begin{align*}
    \sem{\exp(s,t)} & {} = \sem{s}^{|\sem{t}|}                                                         \\
                    & {} = \sem{-s}^{|\sem{t}|} \tag{as $\sem{t}$, and thus also $|\sem{t}|$, is even} \\
                    & {} = \sem{\exp(-s,t)}
  \end{align*}
  The proofs for \ref{lem:symmetry-t-2} and \ref{lem:symmetry-t-3} are analogous, see
 App.\ \ref{sec:MissingProofs}.
\end{proof}
\makeproof*{lem:symmetry}{
  \symmetrysound*
	\begin{proof}
		\omitproof{Let $\AA \in \EIA$ again be arbitrary but fixed and let $\sem{\ldots}$ denote $\sem{\ldots}^\AA$.
		For \ref{lem:symmetry-t-1}, assume $\sem{\divisible_2(t)} = \true$, i.e., assume that $\sem{t}$ is even.
		Then it remains to show
		\[
			\sem{\exp(s,t) = \exp(-s,t)} = \true, \quad \text{i.e.,} \quad \sem{\exp(s,t)} = \sem{\exp(-s,t)}.
		\]
		We have:
		\begin{align*}
			\sem{\exp(s,t)} & {} = \sem{s}^{|\sem{t}|}                                                         \\
			                & {} = \sem{-s}^{|\sem{t}|} \tag{as $\sem{t}$, and thus also $|\sem{t}|$, is even} \\
			                & {} = \sem{\exp(-s,t)}
		\end{align*}
                }
                We already showed the soundness of \ref{lem:symmetry-t-1}
                in \Cref{sec:solving}.	
		For \ref{lem:symmetry-t-2}, assume $\sem{\divisible_2(t)} = \false$, i.e., assume that $\sem{t}$ is odd.
		Then it remains to show
		\[
			\sem{\exp(s,t) = -\exp(-s,t)} = \true, \quad \text{i.e.,} \quad \sem{\exp(s,t)} = \sem{-\exp(-s,t)}.
		\]
		We have:
		\begin{align*}
			\sem{\exp(s,t)} & {} = \sem{s}^{|\sem{t}|}                                                         \\
			                & {} = -\sem{-s}^{|\sem{t}|} \tag{as $\sem{t}$, and thus also $|\sem{t}|$, is odd} \\
			                & {} = -\sem{\exp(-s,t)}                                                           \\
			                & {} = \sem{-\exp(-s,t)}                                                           \\
		\end{align*}

		For \ref{lem:symmetry-t-3}, we have to show
		\[
			\sem{\exp(s,t) = \exp(s,-t)} = \true, \quad \text{i.e.,} \quad \sem{\exp(s,t)} = \sem{\exp(s,-t)}.
		\]
		We have:
		\[
			\sem{\exp(s,t)} = \sem{s}^{|\sem{t}|} = \sem{s}^{|-\sem{t}|} = \sem{s}^{|\sem{-t}|} = \sem{\exp(s,-t)}
		\]
	\end{proof}
}

\subsubsection{Monotonicity Lemmas}
\label{subsec:mon}

\emph{Monotonicity lemmas} are of the form
\begin{equation}
	\label{eq:monotonicity}
	\tag{\SC{mon}}
	s_2 \geq s_1 > 1 \land t_2 \geq t_1 > 0 \land (s_2 > s_1 \lor t_2 > t_1) \implies \exp(s_2,t_2) > \exp(s_1,t_1),
\end{equation}
i.e., they prohibit violations of monotonicity of $\exp$.

\begin{example}[Monotonicity Lemmas]
	For our leading example \eqref{eq:goal}, we obtain, e.g., the following lemmas:
	\begin{align}
		x > 1 \land \exp(y,y) > y^2 > 0 \implies {}  & \exp(x, \exp(y,y)) > \exp(x, y^2) \label{ex:mon_a}
		\\
		x > 1 \land -\exp(y,y) > y^2 > 0 \implies {} & \exp(x, -\exp(y,y)) > \exp(x, y^2) \label{ex:mon_b}
	\end{align}
\end{example}
So for each pair of two different relevant terms $\exp(s_1,t_1), \exp(s_2,t_2)$ where $\sem{s_2} \geq \sem{s_1} > 1$ and $\sem{t_2} \geq \sem{t_1} > 0$, the set $\LL$ contains \ref{eq:monotonicity}.
Here and in the sequel, unless mentioned otherwise, $\sem{\ldots}$ means $\sem{\ldots}^\AA$, where $\AA$ is the model from Line \ref{counterexample} of Alg.~\ref{alg}.
\begin{restatable}{lemma}{monotonicitysound}
	\label{lem:monotonicity}
	Let $s_1,s_2,t_1,t_2$ be terms of sort \normalfont{$\Int$}.
	Then \ref{eq:monotonicity} is \EIA-valid.
\end{restatable}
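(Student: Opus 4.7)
The plan is to fix an arbitrary $\AA \in \EIA$, write $\sem{\cdot}$ for $\sem{\cdot}^\AA$, assume the antecedent holds under $\AA$, and derive the strict inequality on the semantics. Since $\sem{s_1}, \sem{s_2} > 1$ and $\sem{t_1}, \sem{t_2} > 0$, the absolute values in the definition of \EIA drop out, so the goal reduces to showing $\sem{s_2}^{\sem{t_2}} > \sem{s_1}^{\sem{t_1}}$ for positive integers $\sem{s_i} \geq 2$ and $\sem{t_i} \geq 1$.

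First I would split on the disjunction $\sem{s_2} > \sem{s_1} \lor \sem{t_2} > \sem{t_1}$. In the first case, since $\sem{t_1} \geq 1$ and $\sem{s_2} > \sem{s_1} \geq 2$, strict monotonicity of the map $c \mapsto c^{\sem{t_1}}$ on integers $\geq 1$ gives $\sem{s_2}^{\sem{t_1}} > \sem{s_1}^{\sem{t_1}}$. Combined with $\sem{s_2}^{\sem{t_2}} \geq \sem{s_2}^{\sem{t_1}}$ (which follows from $\sem{s_2} \geq 2 > 0$ and $\sem{t_2} \geq \sem{t_1}$ via monotonicity of $\sem{s_2}^{\,\cdot\,}$ on $\NN$), we conclude. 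In the second case, $\sem{t_2} > \sem{t_1}$, and since $\sem{s_2} \geq \sem{s_1} > 1$, we have $\sem{s_2}^{\sem{t_2}} \geq \sem{s_1}^{\sem{t_2}} > \sem{s_1}^{\sem{t_1}}$, where the strict inequality uses that $c^d$ is strictly increasing in $d \in \NN$ whenever $c \geq 2$.

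In both cases we obtain $\sem{\exp(s_2, t_2)} = \sem{s_2}^{\sem{t_2}} > \sem{s_1}^{\sem{t_1}} = \sem{\exp(s_1, t_1)}$, as required. The only mildly subtle step is keeping track of which of the two monotonicities (in the base versus in the exponent) must be strict in each case, so that the disjunction $(s_2 > s_1) \lor (t_2 > t_1)$ is actually used to produce the strict conclusion; once the case split is set up correctly, everything else is elementary arithmetic on positive integers. I do not anticipate a real obstacle here, since both $c \mapsto c^n$ (for fixed $n \geq 1$, on integers $\geq 0$) and $n \mapsto c^n$ (for fixed $c \geq 2$, on $\NN$) are standard strictly-increasing functions.
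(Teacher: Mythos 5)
Your proof is correct and follows essentially the same route as the paper's: drop the absolute values using $\sem{t_i}>0$, then combine strict monotonicity of $c\mapsto c^n$ (for fixed $n\geq 1$ on bases $\geq 2$) with strict monotonicity of $n\mapsto c^n$ (for fixed $c\geq 2$), using the disjunction to make one of the two steps strict. The paper states this more tersely without the explicit case split, but the underlying argument is identical.
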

\begin{proof}
  Let $\AA \in \EIA$ be arbitrary but fixed and let $\sem{\ldots}$ denote $\sem{\ldots}^\AA$.
  Assume
  \[
    \sem{s_2 \geq s_1 > 1 \land t_2 \geq t_1 > 0 \land (s_2 > s_1 \lor t_2 > t_1)} = \true,
  \]
  i.e.,
  \[
    \sem{s_2} \geq \sem{s_1} > 1, \sem{t_2} \geq \sem{t_1} > 0, \text{ and } (\sem{s_2} > \sem{s_1} \text{ or } \sem{t_2} > \sem{t_1}).
  \]
  Then it remains to prove
  \[
    \sem{\exp(s_2,t_2) > \exp(s_1,t_1)} = \true, \quad \text{i.e.}, \quad \sem{\exp(s_2,t_2)} > \sem{\exp(s_1,t_1)}.
  \]
  First note that $\lambda x.\ x^{\sem{t_1}}$ is strictly monotonically increasing on $\NN_{>1}$ as $\sem{t_1} > 0$, and $\lambda y.\ \sem{s_1}^y$ is strictly monotonically increasing on $\NN_{>0}$ as $\sem{s_1} > 1$.
  Since we have $\sem{s_2} \in \NN_{>1}$ and $\sem{t_2} \in \NN_{>0}$, we get
  \[
    \sem{\exp(s_2,t_2)} = \sem{s_2}^{|\sem{t_2}|} > \sem{s_1}^{|\sem{t_1}|} = \sem{\exp(s_1,t_1)}
  \]
  due to monotonicity, as we have $\sem{s_2} \geq \sem{s_1}$ and $\sem{t_2} \geq \sem{t_1}$, where at least one of both inequations is strict.
\end{proof}

\subsubsection{Bounding Lemmas}

\emph{Bounding lemmas} provide bounds on relevant terms $\exp(s,t)$ where $\sem{s}$ and $\sem{t}$ are non-negative.
Together with symmetry lemmas, they also give rise to bounds for the cases where $s$ or $t$ are negative.

For each relevant term $\exp(s,t)$ where $\sem{s}$ and $\sem{t}$ are non-negative, the following lemmas are contained in $\LL$:
\begin{align}
	t = 0 \implies {}                             & \exp(s,t) = 1 \label{lem:constant-1}
	\tag{\protect{\ensuremath{\SC{bnd}_1}}}                                                          \\
	t = 1 \implies {}                             & \exp(s,t) = s \label{lem:constant-2}
	\tag{\protect{\ensuremath{\SC{bnd}_2}}}                                                          \\
	s = 0 \land t \neq 0 \iff {}                  & \exp(s,t) = 0 \label{lem:constant-t-1}
	\tag{\protect{\ensuremath{\SC{bnd}_3}}}                                                          \\
	s = 1 \implies {}                             & \exp(s,t) = 1 \label{lem:constant-t-2}
	\tag{\protect{\ensuremath{\SC{bnd}_4}}}                                                          \\
	s + t > 4 \land s > 1 \land t > 1 \implies {} & \exp(s,t) > s \cdot t + 1 \label{lem:constant-3}
	\tag{\protect{\ensuremath{\SC{bnd}_5}}}
\end{align}
The cases $t \in \{0,1\}$ are also addressed by our first rewrite rule (see \Cref{sec:preprocessings}).
However, this rewrite rule only applies if $t$ is an integer constant.
In contrast, the first two lemmas above apply if $t$ evaluates to $0$ or $1$ in the current model.
\begin{example}[Bounding Lemmas]
	\label{ex:bounding}
	For our leading example \eqref{eq:goal}, the following bounding lemmas would be considered, among others:
	\begin{align}
		\ref{lem:constant-1}:   &  & \exp(y,y) = 0 \implies {}            & \exp(x,\exp(y,y)) = 1 \notag             \\
		\ref{lem:constant-2}:   &  & \exp(y,y) = 1 \implies {}            & \exp(x,\exp(y,y)) = x \notag             \\
		\ref{lem:constant-t-1}: &  & x = 0 \land \exp(y,y) \neq 0 \iff {} & \exp(x,\exp(y,y)) = 0 \notag             \\
		\ref{lem:constant-t-2}: &  & x = 1 \implies {}                    & \exp(x,\exp(y,y)) = 1 \notag             \\
		\ref{lem:constant-3}:   &  & y > 2 \implies {}                    & \exp(y,y) > y^2 + 1 \label{ex:bound_a}
		\\
		\ref{lem:constant-3}:   &  & -y > 2 \implies {}                   & \exp(-y,-y) > y^2 + 1 \label{ex:bound_b}
	\end{align}
\end{example}
\begin{restatable}{lemma}{boundingsound}
	\label{lem:bounding}
	Let $s,t$ be terms of sort \normalfont{$\Int$}.
	Then \ref{lem:constant-1} -- \ref{lem:constant-3} are \EIA-valid.
\end{restatable}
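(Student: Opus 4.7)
The plan is to fix an arbitrary $\AA \in \EIA$, abbreviate $\sem{\cdot}^\AA$ by $\sem{\cdot}$, and verify each of the five implications by unfolding the definition $\sem{\exp(s,t)} = \sem{s}^{|\sem{t}|}$, exactly as in the proofs of \Cref{lem:symmetry} and \Cref{lem:monotonicity}.

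The first four lemmas collapse to routine identities in $\ZZ$ once the hypothesis is used. For \ref{lem:constant-1} and \ref{lem:constant-2}, substituting $\sem{t} \in \{0,1\}$ into $\sem{s}^{|\sem{t}|}$ yields $1$ and $\sem{s}$, respectively. For \ref{lem:constant-t-2}, one uses $1^n = 1$ for every $n \in \NN$. The biconditional \ref{lem:constant-t-1} is the only one that requires two separate directions: for ``$\Rightarrow$'', $\sem{s} = 0$ and $|\sem{t}| > 0$ give $0^{|\sem{t}|} = 0$; for ``$\Leftarrow$'', the assumption $\sem{s}^{|\sem{t}|} = 0$ forces $\sem{s} = 0$ (since $\ZZ \setminus \{0\}$ is closed under multiplication, so no nonzero integer has a zero power) and $|\sem{t}| > 0$ (since $a^0 = 1 \neq 0$ for any $a$), whence $\sem{t} \neq 0$.

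The main obstacle is \ref{lem:constant-3}, which reduces to proving the purely numeric claim $a^b > ab + 1$ whenever $a, b \geq 2$ and $a + b \geq 5$, where $a = \sem{s}$ and $b = \sem{t}$. I would split on $b$. If $b = 2$, the side condition forces $a \geq 3$, and then $a^2 - 2a - 1 = (a-1)^2 - 2 \geq 2^2 - 2 > 0$. If $b \geq 3$, I would expand by the binomial theorem:
\[
  a^b = (1 + (a-1))^b = \sum_{k=0}^{b} \binom{b}{k}(a-1)^k.
\]
Since $a - 1 \geq 1$ and $b \geq 3$, the $k = 3$ summand contributes at least $1$ and all other summands are non-negative, so
\[
  a^b \;\geq\; 1 + b(a-1) + \tbinom{b}{2}(a-1)^2 + 1 \;=\; ab + 2 - b + \tbinom{b}{2}(a-1)^2.
\]
It thus suffices to show $\binom{b}{2}(a-1)^2 \geq b - 1$, which is immediate because $(a-1)^2 \geq 1$ and $\binom{b}{2} = \tfrac{b(b-1)}{2} \geq b - 1$ for every $b \geq 2$. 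Combining both cases yields \ref{lem:constant-3}, completing the proof.
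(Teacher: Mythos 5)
Your treatment of \ref{lem:constant-1}--\ref{lem:constant-t-2} matches the paper's: both just unfold $\sem{\exp(s,t)} = \sem{s}^{|\sem{t}|}$ and use elementary facts about $\ZZ$ (including that $\ZZ$ has no zero divisors for the ``$\Leftarrow$'' direction of \ref{lem:constant-t-1}). For \ref{lem:constant-3} you take a genuinely different route: the paper proves $a^b > ab+1$ by induction on $a+b$, with base cases $(2,3)$ and $(3,2)$ and an induction step that peels off one factor of $a$ (when $b>2$) or expands $(a-1+1)^2$ (when $b=2$), whereas you argue directly via the binomial theorem. Your approach is arguably cleaner, avoiding the slightly fiddly bookkeeping of the paper's induction step. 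One small slip, though: from $a^b \geq ab + 2 - b + \tbinom{b}{2}(a-1)^2$, the condition $\tbinom{b}{2}(a-1)^2 \geq b-1$ only yields $a^b \geq ab+1$, not the required \emph{strict} inequality $a^b > ab+1$. You need $\tbinom{b}{2}(a-1)^2 \geq b$, which your own estimate delivers immediately: for $b \geq 3$ one has $\tbinom{b}{2} = \tfrac{b(b-1)}{2} \geq b$ since $b-1 \geq 2$, and $(a-1)^2 \geq 1$. With that one-line repair the argument is complete.
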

\begin{proof}
  We only prove that \ref{lem:constant-3} is \EIA-valid and refer to
  App.\ \ref{sec:MissingProofs} for the (simpler) proofs 
  of \ref{lem:constant-1} -- \ref{lem:constant-t-2}.
  
  Let $\AA \in \EIA$ again be arbitrary but fixed and let $\sem{\ldots}$ denote $\sem{\ldots}^\AA$. 
  To show validity of \ref{lem:constant-3}, assume
  \[
    \sem{s + t > 4 \land s > 1 \land t > 1} = \true, \quad \text{i.e}, \quad \sem{s+t} > 4, \sem{s} > 1, \text{ and } \sem{t} > 1.
  \]
  Then it remains to show
  \[
    \sem{\exp(s,t) > s \cdot t + 1} = \true, \quad \text{i.e.}, \quad \sem{\exp(s,t)} > \sem{s} \cdot \sem{t} + 1.
  \]
  We use induction on $\sem{s} + \sem{t}$.
  In the base case, we have either $\sem{s} = 2$ and $\sem{t} = 3$, or $\sem{s} = 3$ and $\sem{t} = 2$.
  In the former case, we have:
  \[
    \sem{\exp(s,t)} = \sem{s}^{|\sem{t}|} = 2^{|3|} = 8 > 7 = 2 \cdot 3 + 1 = \sem{s} \cdot \sem{t} + 1
  \]
  In the latter case, we have:
  \[
    \sem{\exp(s,t)} = \sem{s}^{|\sem{t}|} = 3^{|2|} = 9 > 7 = 3 \cdot 2 + 1 = \sem{s} \cdot \sem{t} + 1
  \]
  For the induction step, assume $\sem{s} + \sem{t} > 5$.
  If $\sem{t} > 2$, then:
  \begin{align*}
    \sem{\exp(s,t)} & {} = \sem{s}^{|\sem{t}|}                                                                 \\
                    & {} = \sem{s}^{\sem{t}}                                                                   \\
                    & {} = \sem{s}^{\sem{t}-1} \cdot \sem{s}                                                   \\
                    & {} > (\sem{s} \cdot (\sem{t} - 1) + 1) \cdot \sem{s} \tag{by the induction hypothesis}   \\
                    & {} \geq (\sem{s} \cdot (\sem{t} - 1) + 1) \cdot 2 \tag{as $\sem{s} \geq 2$}              \\
                    & {} = (\sem{s} \cdot (\sem{t} - 1) + 1) + (\sem{s} \cdot (\sem{t} - 1) + 1)               \\
                    & {} \geq (\sem{s} \cdot (\sem{t} - 1) + 1) + (\sem{s} \cdot 2 + 1) \tag{as $\sem{t} > 2$} \\
                    & {} > (\sem{s} \cdot (\sem{t} - 1) + 1) + \sem{s}                                         \\
                    & {} = \sem{s} \cdot \sem{t} + 1
  \end{align*}
  If $\sem{t} = 2$, then $\sem{s} > 3$.
  Now we have
  \begin{align*}
    \sem{\exp(s,t)} & {} = \sem{s}^{|\sem{t}|}                                                            \\
                    & {} = \sem{s}^2                                                                      \\
                    & {} = (\sem{s-1} + 1)^2                                                              \\
                    & {} = \sem{s-1}^2 + 2 \cdot \sem{s-1} + 1                                            \\
                    & {} > \sem{s-1} \cdot 2 + 1 + 2 \cdot \sem{s-1} + 1\tag{by the induction hypothesis} \\
                    & {} \geq \sem{s-1} \cdot 2 + 8\tag{as $\sem{s} > 3$ and thus, $\sem{s-1} \geq 3$}    \\
                    & {} = \sem{s} \cdot 2 + 6                                                            \\
                    & {} = \sem{s} \cdot \sem{t} + 6                                                      \\
                    & {} > \sem{s} \cdot \sem{t} + 1
  \end{align*}
\end{proof}
\makeproof*{lem:bounding}{
  \boundingsound*
	\begin{proof}
	\omitproof{Let $\AA \in \EIA$ be arbitrary but fixed and let $\sem{\ldots}$ denote
          $\sem{\ldots}^\AA$.}
        We already showed EIA-validity of
      \ref{lem:constant-3}
                in \Cref{sec:solving}.	
		For \ref{lem:constant-1}, assume
		\[
			\sem{t=0} = \true, \quad \text{i.e.,} \quad \sem{t} = 0.
		\]
		Then it remains to show
		\[
			\sem{\exp(s,t) = 1} = \true, \quad \text{i.e.,} \quad \sem{\exp(s,t)} = 1.
		\]
		We have
		\[
			\sem{\exp(s,t)} = \sem{s}^{|\sem{t}|} = \sem{s}^{|0|} = \sem{s}^0 = 1.
		\]

		For \ref{lem:constant-2}, assume
		\[
			\sem{t=1} = \true, \quad \text{i.e.,} \quad \sem{t} = 1.
		\]
		Then it remains to show
		\[
			\sem{\exp(s,t) = s} = \true, \quad \text{i.e.,} \quad \sem{\exp(s,t)} = \sem{s}.
		\]
		We have
		\[
			\sem{\exp(s,t)} = \sem{s}^{|\sem{t}|} = \sem{s}^{|1|} = \sem{s}^1 = \sem{s}.
		\]

		For \ref{lem:constant-t-1}, we first show the implication from left to right.
		So assume
		\[
			\sem{s = 0 \land t \neq 0} = \true, \quad \text{i.e.,} \quad \sem{s} = 0 \text{ and } \sem{t} \neq 0.
		\]
		Then it remains to show
		\[
			\sem{\exp(s,t) = 0} = \true, \quad \text{i.e.,} \quad \sem{\exp(s,t)} = 0.
		\]
		We have
		\[
			\sem{\exp(s,t)} = \sem{s}^{|\sem{t}|} = 0^{|\sem{t}|} = 0
		\]
		where the last equality holds as we have $\sem{t} \neq 0$, and thus $|\sem{t}| > 0$.

		For the implication from right to left, assume
		\[
			\sem{\exp(s,t) = 0} = \true, \quad \text{i.e.,} \quad \sem{\exp(s,t)} = 0.
		\]
		Then it remains to show
		\[
			\sem{s = 0 \land t \neq 0} = \true, \quad \text{i.e.,} \quad \sem{s} = 0 \text{ and } \sem{t} \neq 0.
		\]
		We have
		\[
			\sem{\exp(s,t)} = \sem{s}^{|\sem{t}|} = 0.
		\]
		So as $\sem{s}^{|\sem{t}|} \neq 1$, we get $|\sem{t}| \neq 0$ and thus also $\sem{t} \neq 0$.
		As a product is only $0$ if one of its factors is $0$, we also get $\sem{s} = 0$.

		For \ref{lem:constant-t-2}, assume
		\[
			\sem{s=1} = \true, \quad \text{i.e.,} \quad \sem{s} = 1.
		\]
		Then it remains to show
		\[
			\sem{\exp(s,t) = 1} = \true, \quad \text{i.e.,} \quad \sem{\exp(s,t)} = 1.
		\]
		We have
		\[
			\sem{\exp(s,t)} = \sem{s}^{|\sem{t}|} = 1^{|\sem{t}|} = 1.
		\]
\omitproof{
		For \ref{lem:constant-3}, assume
		\[
			\sem{s + t > 4 \land s > 1 \land t > 1} = \true, \quad \text{i.e}, \quad \sem{s+t} > 4, \sem{s} > 1, \text{ and } \sem{t} > 1.
		\]
		Then it remains to show
		\[
			\sem{\exp(s,t) > s \cdot t + 1} = \true, \quad \text{i.e.}, \quad \sem{\exp(s,t)} > \sem{s} \cdot \sem{t} + 1.
		\]
		We use induction on $\sem{s} + \sem{t}$.
		In the base case, we have either $\sem{s} = 2$ and $\sem{t} = 3$, or $\sem{s} = 3$ and $\sem{t} = 2$.
		In the former case, we have:
		\[
			\sem{\exp(s,t)} = \sem{s}^{|\sem{t}|} = 2^{|3|} = 8 > 7 = 2 \cdot 3 + 1 = \sem{s} \cdot \sem{t} + 1
		\]
		In the latter case, we have:
		\[
			\sem{\exp(s,t)} = \sem{s}^{|\sem{t}|} = 3^{|2|} = 9 > 7 = 3 \cdot 2 + 1 = \sem{s} \cdot \sem{t} + 1
		\]
		For the induction step, assume $\sem{s} + \sem{t} > 5$.
		If $\sem{t} > 2$, then:
		\begin{align*}
			\sem{\exp(s,t)} & {} = \sem{s}^{|\sem{t}|}                                                                 \\
			                & {} = \sem{s}^{\sem{t}}                                                                   \\
			                & {} = \sem{s}^{\sem{t}-1} \cdot \sem{s}                                                   \\
			                & {} > (\sem{s} \cdot (\sem{t} - 1) + 1) \cdot \sem{s} \tag{by the induction hypothesis}   \\
			                & {} \geq (\sem{s} \cdot (\sem{t} - 1) + 1) \cdot 2 \tag{as $\sem{s} \geq 2$}              \\
			                & {} = (\sem{s} \cdot (\sem{t} - 1) + 1) + (\sem{s} \cdot (\sem{t} - 1) + 1)               \\
			                & {} \geq (\sem{s} \cdot (\sem{t} - 1) + 1) + (\sem{s} \cdot 2 + 1) \tag{as $\sem{t} > 2$} \\
			                & {} > (\sem{s} \cdot (\sem{t} - 1) + 1) + \sem{s}                                         \\
			                & {} = \sem{s} \cdot \sem{t} + 1
		\end{align*}
		If $\sem{t} = 2$, then $\sem{s} > 3$.
		Now we have
		\begin{align*}
			\sem{\exp(s,t)} & {} = \sem{s}^{|\sem{t}|}                                                            \\
			                & {} = \sem{s}^2                                                                      \\
			                & {} = (\sem{s-1} + 1)^2                                                              \\
			                & {} = \sem{s-1}^2 + 2 \cdot \sem{s-1} + 1                                            \\
			                & {} > \sem{s-1} \cdot 2 + 1 + 2 \cdot \sem{s-1} + 1\tag{by the induction hypothesis} \\
			                & {} \geq \sem{s-1} \cdot 2 + 8\tag{as $\sem{s} > 3$ and thus, $\sem{s-1} \geq 3$}    \\
			                & {} = \sem{s} \cdot 2 + 6                                                            \\
			                & {} = \sem{s} \cdot \sem{t} + 6                                                      \\
			                & {} > \sem{s} \cdot \sem{t} + 1
		\end{align*}}
	\end{proof}
}

The reason why the bounding lemmas focus on lower bounds is that polynomials can only bound $\exp(s,t)$ from above for finitely many values of $s$ and $t$.
The bounding lemmas are defined in such a way that they provide lower bounds for $\exp(s,t)$ for almost all non-negative values of $s$ and $t$:
\ref{lem:constant-1}--\ref{lem:constant-t-2} provide tight bounds for all cases where at least one argument of $\exp$ is $0$ or $1$, and \ref{lem:constant-3} provides lower bounds for all other cases except for $\exp(2,2)$.
To cover \emph{all} cases, we could replace \ref{lem:constant-3}
by
\[
  s > 1 \land t \geq 0 \implies \exp(s,t) \geq s \cdot t.
\]
However, in contrast to the alternative lemma above, \ref{lem:constant-3} expresses that $\exp(s,t)$ is \emph{strictly} greater than $s \cdot t$.
The missing (lower and upper) bounds are provided by \emph{interpolation lemmas} (see \Cref{sec:interpolation}).

\subsubsection{Prime Lemmas}

\emph{Prime lemmas} are of the form
\begin{equation}
  \label{eq:prime}\tag{\SC{prime}}
  \divisible_d(\exp(s,t)) \iff \divisible_d(s)  \, \land \, t \neq 0 \qquad \text{where $d$ is prime}.
\end{equation}
So prime lemmas express that $\exp(s,t)$ and $s$ have the same prime factors for $t \neq 0$.
\begin{example}[Prime Lemmas]
  \label{ex:prime}
  Consider the formula
  \begin{equation}
    \label{eq:prime-ex}
    y \neq 0 \land \exp(2,x) = \exp(3,y),
  \end{equation}
  which was used by
  \textcite{conference}
  (where prime lemmas were not yet integrated) 
  to illustrate incompleteness of our approach.
  Unsatisfiability of this formula can easily be proven
    with the prime lemmas
  \begin{align*}
    \divisible_2(\exp(2,x)) \iff \divisible_2(2) \land x \neq 0& \;\; \equiv \;\;
    \divisible_2(\exp(2,x)) \iff x \neq 0& \text{and} \\
    \divisible_2(\exp(3,y)) \iff \divisible_2(3)  \land y \neq 0& \;\; \equiv \;\; \neg\divisible_2(\exp(3,y)),
  \end{align*}
  which rule out structures $\AA$ with $\AA \models x \neq 0$.
  The reason is that $x \neq 0$ implies $\divisible_2(\exp(2,x))$ by the first prime lemma, which, together with the second prime lemma, implies $\exp(2,x) \neq \exp(3,y)$.
  For the case $x = 0$, the following additional lemmas are needed:
  \begin{align*}
    x = 0 \implies \exp(2,x) & {} = 1 \tag{\ref{lem:constant-1}} \\
    y = 1 \implies \exp(3,y) & {} = 3 \tag{\ref{lem:constant-2}} \\
    y > 1 \implies \exp(3,y) & {} > 3 \cdot y + 1 \tag{\ref{lem:constant-3}} \\
    \exp(3,y) & {} = \exp(3,-y) \tag{\ref{lem:symmetry-t-3}}
  \end{align*}
  Then we have $\exp(2,x) = 1$ by the first bounding lemma and $\exp(3,y) > 1$ for all $y > 0$ by the second and third bounding lemma.
   Together with the symmetry lemma, the latter also implies $\exp(3,y) > 1$ for all $y < 0$.
   As \eqref{eq:prime-ex} implies $y \neq 0$, this suffices to prove unsatisfiability.
\end{example}
So for each relevant term $\exp(s,t)$ where $\sem{s}\geq 2$, $\sem{\exp(s,t)} \geq 2$,
and where
the sets
  of prime factors of $\sem{s}$ and $\sem{\exp(s,t)}$ differ, $\LL$ contains
\ref{eq:prime}, where $d$ is the smallest prime which divides $\sem{s}$ or
$\sem{\exp(s,t)}$, but not both of them.
To find $d$, we use wheel factorization \cite{wheel-factorization}.
\begin{lemma}
  \label{lem:prime}
  Let $s,t$ be terms of sort \normalfont{$\Int$} and let $d \in \NN$ be prime.
  Then \ref{eq:prime} is \EIA-valid.
\end{lemma}
\begin{proof}
  Again, let $\AA \in \EIA$ be arbitrary but fixed and let $\sem{\ldots}$ denote $\sem{\ldots}^\AA$.
  We need to prove
  \[
    \sem{\divisible_d(\exp(s,t))} = \sem{\divisible_d(s) \, \land \, t \neq 0},
  \]
  i.e.,
  \[
     \sem{\divisible_d(\exp(s,t))} \quad \text{ iff } \quad \sem{\divisible_d(s)} \text{ and
       } \sem{t} \neq 0.
  \]
  If $\sem{t} = 0$, then the claim holds since then $\sem{s}^{|\sem{t}|} = \sem{s}^{0} =
  1$ is not
  divisible by the prime number $d$.
  Otherwise, if $\sem{t} \neq 0$, then 
  we have to prove that $\sem{s}$ and $\sem{s}^{|\sem{t}|}$ have the same prime factors.
  If $P$ is the multiset of $\sem{s}$'s prime factors, i.e., $\prod P = \sem{s}$, then $\prod_{i=1}^{|\sem{t}|} \prod P = \sem{s}^{|\sem{t}|}$, so the claim follows.
\end{proof}

\subsubsection{Induction Lemmas}
\label{sec:induction}

\emph{Induction lemmas} are of the form
\begin{equation}
  \label{eq:induction}\tag{\SC{ind}}
  s_1 = s_2 \land t_2 - d = t_1 \geq 0 \implies \exp(s_2,t_2) = \exp(s_1,t_1) \cdot s_1^d \qquad \text{where } d \in \NN_{>0}.
\end{equation}
So intuitively, \ref{eq:induction} results from $\exp(s_2,t_2)$ by $d$ unrollings of the recursive definition
\[
  \exp(s,x) = \begin{cases}
    \exp(s,x-1) \cdot s & \text{if } x > 0\\
    1 & \text{if } x = 0
  \end{cases}
\]
as follows:
\[
  \exp(s_2,t_2) = \exp(s_2,t_2-1) \cdot s_2 = \exp(s_2,t_2-2) \cdot s_2^2 = \ldots = \exp(s_2,t_2-d) \cdot s_2^d = \exp(s_1,t_1) \cdot s_1^d
  \]

  \vspace*{-.1cm}
  
\begin{example}[Induction Lemmas]
  \label{ex:induction}
  We demonstrate the usefulness of induction lemmas by an example where SMT solving is
  used to verify a solution for a recurrence relation.
  Consider the following recurrence relation (where $n$ ranges over $\NN_{>0}$):
  \begin{equation}
    \label{eq:rec}
    f(n) = 2 \cdot f(n-1) + 2^n
  \end{equation}
  A possible solution (i.e., a function $f$ that satisfies \eqref{eq:rec}) is
  \begin{equation}
    \label{eq:sol}
    f(n) = (f(0) + n) \cdot 2^n.
  \end{equation}
  To verify that this function is indeed a solution, it suffices to replace $f$ with the definition from \eqref{eq:sol} and show that the resulting equation holds for all $n \in \NN_{>0}$ and all $f(0) \in \ZZ$, i.e., to prove
  \[
    \forall n \in \NN_{>0}, f_0 \in \ZZ.\ (f_0 + n) \cdot 2^n = 2 \cdot (f_0 + n - 1) \cdot 2^{n-1} + 2^n.
  \]
  To do so, it suffices to prove unsatisfiability of
  \begin{equation}
    \label{recExBenchmark}
    n \geq 1 \land (f_0 + n) \cdot \exp(2,n) \neq 2 \cdot (f_0 + n - 1) \cdot \exp(2,n-1) + \exp(2,n).
  \end{equation}
  To this end, our implementation deduces the following induction lemma:
  \[
    n \geq 1 \implies \exp(2,n) = 2 \cdot \exp(2,n-1)
  \]
  Then unsatisfiability can easily be proven without reasoning about $\exp$:
  Assume $n \geq 1$.
  Then by the induction lemma, we may substitute $\exp(2,n)$ with $2 \cdot \exp(2,n-1)$, resulting in
  \begin{align*}
    & (f_0 + n) \cdot 2 \cdot \exp(2,n-1) \neq 2 \cdot (f_0 + n - 1) \cdot \exp(2,n-1) + 2 \cdot \exp(2,n-1) \\
    {} \equiv {} & (f_0 + n) \cdot 2 \cdot \exp(2,n-1) \neq 2 \cdot (f_0 + n) \cdot \exp(2,n-1)
  \end{align*}
  which is trivially unsatisfiable.
  Hence, \eqref{eq:sol} is a valid solution for \eqref{eq:rec}.
\end{example}
So for each pair of relevant terms $\exp(s_1,t_1), \exp(s_2,t_2)$ where $\sem{s_1} = \sem{s_2}$ and $\sem{t_2} - d = \sem{t_1} \geq 0$ for some $d \in \NN_{>0}$, $\LL$ contains \ref{eq:induction}.
\begin{lemma}
  \label{lem:induction}
  Let $s_1,s_2,t_1,t_2$ be terms of sort \normalfont{$\Int$} and let $d \in \NN_{>0}$.
  Then \ref{eq:induction} is \EIA-valid.
\end{lemma}
\begin{proof}
  Let $\AA \in \EIA$ be an arbitrary but fixed model of
  \[
    s_1 = s_2 \land t_2 - d = t_1 \geq 0
  \]
  and let $\sem{\ldots}$ denote $\sem{\ldots}^\AA$.
  Then we have:
  \begin{align*}
    \sem{\exp(s_1,t_1) \cdot s_1^d} = {} & \sem{\exp(s_1,t_1)} \cdot \sem{s_1}^d \\
    {} = {} & \sem{s_1}^{|\sem{t_1}|} \cdot \sem{s_1}^d \\
    {} = {} & \sem{s_1}^{|\sem{t_1} + d|} \tag{as $\AA \models t_1 \geq 0$ and $d \in \NN_{>0}$} \\
    {} = {} & \sem{s_1}^{|\sem{t_2}|} \tag{as $\AA \models t_2 - d = t_1$} \\
    {} = {} & \sem{s_2}^{|\sem{t_2}|} \tag{as $\AA \models s_1 = s_2$} \\
    {} = {} & \sem{\exp(s_2,t_2)}
  \end{align*}
\end{proof}

\subsubsection{Interpolation Lemmas}
\label{sec:interpolation}
To provide bounds in cases where no bounding lemmas are violated, we use \emph{interpolation lemmas} that are constructed via \emph{bilinear interpolation}.
Here, we assume that the arguments of $\exp$ are positive, as negative arguments are handled by symmetry lemmas, and bounding lemmas yield tight bounds if at least one argument of $\exp$ is $0$.
The correctness of interpolation lemmas relies on the following observation.
As usual, $[w_1,w_2]$ and $(w_1,w_2)$ denote closed and open real intervals.
\begin{restatable}{lemma}{interpol}
	\label{lem:interpol}
	Let
          $c \geq 0$, $w_1,w_2 \in \RR_{> c}$,\footnote{The lemma also holds for $\RR_{\geq c}$ (and the proof is analogous).} $w_1 < w_2$, and let $f: \RR_{> c} \to \RR_{>0}$ be convex.
	Then
	\begin{align*}
		\forall x \in [w_1,w_2].                 & \ f(x) \leq f(w_1) + \frac{f(w_2)-f(w_1)}{w_2-w_1} \cdot (x - w_1)  & \text{and} \\
		\forall x \in \RR_{>c} \setminus (w_1,w_2). & \ f(x) \geq f(w_1) + \frac{f(w_2)-f(w_1)}{w_2-w_1} \cdot (x - w_1).
	\end{align*}
\end{restatable}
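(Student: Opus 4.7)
The plan is to derive both inequalities directly from the definition of convexity, namely that $f(\lambda a + (1-\lambda) b) \leq \lambda f(a) + (1-\lambda) f(b)$ for all $\lambda \in [0,1]$ and $a,b$ in the domain. Geometrically, the first statement says that on $[w_1,w_2]$ the graph lies below the secant through $(w_1,f(w_1))$ and $(w_2,f(w_2))$, while the second says that outside this interval the graph lies above the secant (extrapolated). Both are standard, and the proof is essentially algebraic, so the only real work is keeping the parametrisation clean.

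For the first inequality, given $x \in [w_1,w_2]$, I would set $\lambda \Def \frac{w_2 - x}{w_2 - w_1} \in [0,1]$ so that $x = \lambda w_1 + (1-\lambda) w_2$. Applying convexity yields $f(x) \leq \lambda f(w_1) + (1-\lambda) f(w_2)$. Substituting the definition of $\lambda$ and simplifying produces exactly $f(w_1) + \frac{f(w_2)-f(w_1)}{w_2-w_1}(x - w_1)$.

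For the second inequality, I would split on whether $x > w_2$ or $x < w_1$ (with $x \geq 0$ in the latter case, since the domain is $\RR_+$). In the case $x > w_2$, I would express $w_2$ as a convex combination of $w_1$ and $x$, namely $w_2 = \mu w_1 + (1-\mu) x$ with $\mu \Def \frac{x - w_2}{x - w_1} \in (0,1)$. Convexity then gives $f(w_2) \leq \mu f(w_1) + (1-\mu) f(x)$; solving for $f(x)$ and substituting $\mu$ yields the claimed lower bound after simplification. The case $x < w_1$ is symmetric: write $w_1 = \nu x + (1-\nu) w_2$ with $\nu \Def \frac{w_2 - w_1}{w_2 - x} \in (0,1)$, apply convexity, and solve for $f(x)$.

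The main obstacle is not conceptual but purely notational: the three convex combinations use different weights and the algebra must be done carefully so the final expression matches the secant formula literally. To avoid redundancy, I would present the first inequality in full and then, for the second, handle $x > w_2$ in detail and remark that $x < w_1$ is analogous by swapping the roles of $w_1$ and $x$. No property of $\RR_+$ beyond $x \geq 0$ is used; codomain positivity of $f$ is never needed since all inequalities follow from convexity alone.
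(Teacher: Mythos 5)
Your proof is correct. For the first inequality you do exactly what the paper does: parametrise $x$ as a convex combination of $w_1$ and $w_2$ and apply the defining inequality of convexity (your weight $\lambda = \frac{w_2-x}{w_2-w_1}$ is just $1-v$ for the paper's $v$). For the second inequality the two arguments diverge: the paper invokes a ``reverse Jensen'' fact from the literature, namely that convex $f$ satisfies $f(v w_2 + (1-v) w_1) \geq v f(w_2) + (1-v) f(w_1)$ whenever $v \notin (0,1)$ and the combination stays in the domain, and then plugs in $v = \frac{x-w_1}{w_2-w_1}$. You instead derive the extrapolation bound from scratch by noticing that when $x$ lies outside $[w_1,w_2]$, one of the two interpolation nodes ($w_2$ if $x>w_2$, $w_1$ if $x<w_1$) is itself a convex combination of the other node and $x$; applying ordinary convexity there and solving for $f(x)$ gives the secant lower bound. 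I checked the algebra in both of your cases and it reduces to the stated formula. Your route is self-contained and avoids the external citation at the cost of a case split and slightly more bookkeeping; the paper's route is shorter but outsources the key step. One cosmetic remark: your case split ``$x > w_2$ or $x < w_1$'' silently omits $x \in \{w_1, w_2\}$, which belong to $\RR_+ \setminus (w_1,w_2)$; there the secant agrees with $f$ and the inequality is an equality, so nothing is lost, but you should say so.
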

\noindent
Note that the right-hand side of the inequations above is the linear interpolant of $f$ between $w_1$ and $w_2$.
Intuitively, it corresponds to the secant of $f$ between the points $(w_1,f(w_1))$ and $(w_2,f(w_2))$, and thus the lemma follows from convexity of $f$.
\begin{proof}[Proof of \Cref{lem:interpol}]
  For the first inequation, recall that a function $f:\RR_{> c} \to \RR_{>0}$ is convex if for all $w_1,w_2 \in \RR_{> c}$ and all $v \in [0,1]$, we have
  \begin{equation}
    \label{eq:convex}
    f(v \cdot w_2 + (1-v) \cdot w_1) \leq v \cdot f(w_2) + (1-v) \cdot f(w_1).
  \end{equation}
  Let $x \in [w_1,w_2]$ and $v = \frac{x-w_1}{w_2 - w_1}$.
  Then we have $v \in [0,1]$ and
  \begin{align*}
    f(x) = {}  & f(v \cdot w_2 + (1-v) \cdot w_1)                                                \\
    {} \leq {} & v \cdot f(w_2) + (1-v) \cdot f(w_1) \tag{by convexity of $f$ and $v \in [0,1]$} \\
    {} = {}    & f(w_1) + \frac{f(w_2)-f(w_1)}{w_2-w_1} \cdot (x - w_1),
  \end{align*}
  as desired.

  For the second inequation, note that dually to \eqref{eq:convex}, convex functions $f: \RR_{> c} \to \RR_{>0}$ satisfy
  \[
    f(v \cdot w_2 + (1-v) \cdot w_1) \geq v \cdot f(w_2) + (1-v) \cdot f(w_1)
  \]
  for all $w_1,w_2 \in \RR_{> c}$ and $v \notin (0,1)$ such that $v \cdot w_2 + (1-v) \cdot w_1 \in \RR_{> c}$ \cite{reverse-jensen, reverse-jensen-2}.
  Let $x \in \RR_{>c} \setminus (w_1,w_2)$ and $v = \frac{x-w_1}{w_2 - w_1}$.
  Then we have $v \notin (0,1)$ and
  \begin{align*}
    f(x) = {}  & f(v \cdot w_2 + (1-v) \cdot w_1)                                                   \\
    {} \geq {} & v \cdot f(w_2) + (1-v) \cdot f(w_1) \tag{by convexity of $f$ and $v \notin (0,1)$} \\
    {} = {}    & f(w_1) + \frac{f(w_2)-f(w_1)}{w_2-w_1} \cdot (x - w_1),
  \end{align*}
  as desired.
\end{proof}

Let $\exp(s,t)$ be relevant, $\sem{s} = c > 0$, $\sem{t} = d > 0$, and ${\sem{\exp}}(c,d) \neq c^d$, i.e., we want to prohibit the current interpretation of $\exp(s,t)$.

\paragraph{Interpolation Lemmas for Upper Bounds}
First assume ${\sem{\exp}}(c,d) > c^d$, i.e., to rule out this counterexample, we need a lemma that provides a suitable upper bound for $\exp(c,d)$.
Let $c',d' \in \NN_{>0}$ and:
\begin{align*}
	c^- & {} \Def \min(c,c') & c^+     & {} \Def \max(c,c')          & d^-     & {} \Def \min(d,d')          & d^+ & {} \Def \max(d,d') \\
	    &                    & [c^\pm] & {} \Def [c^- \twoldots c^+] & [d^\pm] & {} \Def [d^- \twoldots d^+]
\end{align*}
Here, $[a \twoldots b]$ denotes a closed integer interval.
Then we first use $d^-,d^+$ for linear interpolation w.r.t.\ the $2^{nd}$ argument of $\lambda x,y.\ x^y$.
To this end, let
\[
	\ip{2}{d}(x,y) \Def x^{d^-} + \efrac{x^{d^+} - x^{d^-}}{d^+ - d^-} \cdot (y - d^-),
\]
where we define $\efrac{a}{b} \Def \frac{a}{b}$ if $b \neq 0$ and $\efrac{a}{0} \Def 0$.
So if $d^- < d^+$, then $\ip{2}{d}(x,y)$ corresponds to the linear interpolant of $x^y$ w.r.t.\ $y$ between $d^-$ and $d^+$.
Then $\ip{2}{d}(x,y)$ is a suitable upper bound, as
\begin{equation}
	\label{eq:secant}
	\forall x \in \NN_{>0}, y \in [d^\pm].\ x^y \leq \ip{2}{d}(x,y)
\end{equation}
follows from \Cref{lem:interpol}:
If $d^- = d^+$, then \eqref{eq:secant} is trivial.
Otherwise, let $x \in \NN_{>0}$ be arbitrary but fixed.
Then \eqref{eq:secant} follows by applying \Cref{lem:interpol} to $f(y) \Def x^y$, which is clearly convex on $\RR_{>0}$.

Hence, we could derive the following \EIA-valid lemma:\footnote{Strictly speaking, this
lemma is not a $\Sigma_\Int^\exp$-term if $d^+ > d^-$, as the right-hand side makes use of
(rational) division in this case.
	However, an equivalent $\Sigma_\Int^\exp$-term can clearly be obtained by multiplying with the divisor.}
\begin{equation}
	s > 0 \land t \in [d^\pm] \implies \exp(s,t) \leq \ip{2}{d}(s,t) \tag{\protect{\ensuremath{\SC{ip}_1}}}\label{eq:il1}
\end{equation}

\begin{example}[Linear Interpolation w.r.t.\ $y$]
	\label{ex:linear-interpolation}
	Let $\sem{\exp(s,t)} = {\sem{\exp}}(3,9) > 3^9$, i.e., we have $c = 3$ and $d = 9$.
	Moreover, assume $c' = d' = 1$, i.e., we get $c^- = 1$, $c^+ = 3$, $d^- = 1$, and $d^+ = 9$.
	Then
	\[
		\ip{2}{d}(x,y) = \ipc{2}{1}{9}(x,y) = x^{1} + \frac{x^{9} - x^{1}}{9 - 1} \cdot (y - 1) = x + \frac{x^9-x}{8} \cdot (y-1).
	\]
	Hence, \ref{eq:il1} corresponds to
	\begin{align*}
		s > 0 \land t \in [1,9] \implies \exp(s,t) \leq s + \frac{s^9-s}{8} \cdot (t-1).
	\end{align*}
	This lemma would be violated by our counterexample, as we have
	\begin{align*}
		\sem{s + \frac{s^9-s}{8} \cdot (t-1)} = 3 + \frac{3^9-3}{8} \cdot 8 = 3^9 < {\sem{\exp}}(3,9) = \sem{\exp(s,t)}.
	\end{align*}
\end{example}
However, the degree of $\ip{2}{d}(s,t)$ depends on $d^+$, which in turn depends on the model that was found by the underlying SMT solver.
Thus, the degree of $\ip{2}{d}(s,t)$ can get very large, which is challenging for the underlying solver.

So we next use $c^-,c^+$ for linear interpolation w.r.t.\ the $1^{st}$ argument of $\lambda x,y.\ x^y$, resulting in
\[
	\ip{1}{c}(x,y) \Def (c^-)^y + \efrac{(c^+)^y - (c^-)^y}{c^+ - c^-} \cdot (x - c^-).
\]
Then due to \Cref{lem:interpol}, $\ip{1}{c}(x,y)$ is also an upper bound on the exponentiation function, i.e., we have
\begin{equation}
	\label{eq:secant1}
	\forall y \in \NN_{>0}, x \in [c^\pm].\ x^y \leq \ip{1}{c}(x,y).
\end{equation}
To see this, note that \eqref{eq:secant1} is trivial if $c^- = c^+$.
Otherwise, let $y \in \NN_{>0}$ be arbitrary but fixed.
Then \eqref{eq:secant1} follows by applying \Cref{lem:interpol} to $f(x) \Def x^y$, which is clearly convex on $\RR_{>0}$.

Note that we have $\efrac{y - d^-}{d^+ - d^-} \in [0,1]$ for all $y \in [d^\pm]$, and thus
\begin{align*}
	\ip{2}{d}(x,y) = x^{d^-} \cdot \left(1-\efrac{y - d^-}{d^+ - d^-}\right) + x^{d^+} \cdot \efrac{y - d^-}{d^+ - d^-}
\end{align*}
is monotonically increasing in both $x^{d^-}$ and $x^{d^+}$.
Hence, in the definition of $\ip{2}{d}$, we can approximate $x^{d^-}$ and $x^{d^+}$ with their upper bounds $\ip{1}{c}(x,d^-)$ and $\ip{1}{c}(x,d^+)$ that can be derived from \eqref{eq:secant1}.
Then \eqref{eq:secant} yields
\begin{equation}
	\label{eq:secant2}
	\forall x \in [c^\pm], y \in [d^\pm].\ x^y \leq \ipf{c}{d}(x,y)
\end{equation}
where
\[
	\ipf{c}{d}(x,y) \Def \ip{1}{c}(x,d^-) + \efrac{\ip{1}{c}(x,d^+) - \ip{1}{c}(x,d^-)}{d^+ - d^-} \cdot (y - d^-).
\]
So the set $\LL$ contains the lemma
\begin{equation}
	s \in [c^\pm] \land t \in [d^\pm] \implies \exp(s,t) \leq \ipf{c}{d}(s,t), \tag{\protect{\ensuremath{\SC{ip}_2}}}\label{eq:il2}
\end{equation}
which is valid due to \eqref{eq:secant2}, and rules out any counterexample with
${\sem{\exp}}(c,d) > c^d$, as $\ipf{c}{d}(c,d) = c^d$.

\begin{lemma}
	\label{lem:SoundnessInterpolationUpper}
	Let $c^+ \geq c^- > 0$ and $d^+ \geq d^- > 0$.
	Then \ref{eq:il2} is \EIA-valid.
\end{lemma}

\begin{example}[Bilinear Interpolation, \Cref{ex:linear-interpolation} continued]
	\label{ex:ip2}
	In our example, we have:
	\begin{align*}
		\ip{1}{c}(x,y)   & {} = \ipc{1}{1}{3}(x,y) = 1^y + \frac{3^y - 1^y}{3 - 1} \cdot (x - 1) = 1 + \frac{3^y-1}{2} \cdot (x-1) \\
		\ip{1}{c}(s,d^-) & {} = \ipc{1}{1}{3}(s,1) = 1 + \frac{3-1}{2} \cdot (s-1) = s                                             \\
		\ip{1}{c}(s,d^+) & {} = \ipc{1}{1}{3}(s,9) = 1 + \frac{3^9-1}{2} \cdot (s-1) = 1 + 9841 \cdot (s-1)
	\end{align*}
	Hence, we obtain the lemma
	\[
		s \in [1,3] \land t \in [1,9] \implies \exp(s,t) \leq s + \frac{1 + 9841 \cdot (s - 1) -s}{8} \cdot (t-1).
	\]
	This lemma is violated by our counterexample, as we have
	\begin{align*}
		\sem{s + \frac{1 + 9841 \cdot (s - 1) -s}{8} \cdot (t-1)} = 3^9 < {\sem{\exp}}(3,9) = \sem{\exp(s,t)}.
	\end{align*}
\end{example}
\ref{eq:il2} relates $\exp(s,t)$ with the \emph{bilinear} function $\ipf{c}{d}(s,t)$, i.e., this function is linear w.r.t.\ both $s$ and $t$, but it multiplies $s$ and $t$.
Thus, if $s$ is an integer constant and $t$ is linear, then the resulting lemma is linear, too.

To compute interpolation lemmas, a second point $(c',d')$ is needed.
In our implementation, we store all points $(c,d)$ where interpolation has previously been applied and use the one which is closest to the current one.
The same heuristic is used by \textcite{cimatti18a} to compute \emph{secant lemmas}.
For the $1^{st}$ interpolation step, we choose $(c',d') = (c,d)$.
In this case, \ref{eq:il2} simplifies to $s = c \land t = d \implies \exp(s,t) \leq c^d$.

\paragraph{Interpolation Lemmas for Lower Bounds}

While bounding lemmas already yield lower bounds, the bounds provided by \ref{lem:constant-3} are not exact, in general.
Hence, if ${\sem{\exp}}(c,d) < c^d$, then we also use bilinear interpolation to obtain a precise lower bound for $\exp(c,d)$.
Dually to \eqref{eq:secant} and \eqref{eq:secant1}, \Cref{lem:interpol} implies:
\begin{align}
  \label{eq:lb1}
  \forall x,y \in \NN_{>0}.\ x^y \geq \ipc{2}{d}{d+1}(x,y)\\
  \label{eq:lb2a}
  \forall x,y \in \NN_{>0}.\ x^y \geq \ipc{1}{c}{c+1}(x,y)
\end{align}
To see this for \eqref{eq:lb1}, let $x \in \NN_{>0}$ be arbitrary but fixed.
Then \eqref{eq:lb1} follows by applying \Cref{lem:interpol} to $f(y) \Def x^y$,
which is clearly convex on $\RR_{>0}$. While \Cref{lem:interpol} implies the claim for $y
\in \RR_{>0} \setminus (d, d+1)$, note that
  $\NN_{>0} \subseteq \RR_{>0} \setminus (d, d+1)$, since $(d, d+1)$ does not contain any
integers. 
The proof of \eqref{eq:lb2a} works analogously by considering $f(x) \Def x^y$ for
fixed $y \in \NN_{>0}$.

Additionally, we also obtain
\begin{equation}
	\label{eq:lb2}
	\forall x,y \in \NN_{>0}.\ x^{y+1} - x^y \geq \ipc{1}{c}{c+1}(x,y+1)-\ipc{1}{c}{c+1}(x,y)
\end{equation}
from \Cref{lem:interpol}.
The reason is that for $f(x) \Def x^{y+1} - x^y$, the right-hand side of \eqref{eq:lb2} is equal to the linear interpolant of $f$ between $c$ and $c+1$.
Moreover, $f$ is convex, as $f(x) = x^y \cdot (x-1)$ where for any fixed $y \in \NN_{>0}$, both $x^y$ and $x-1$ are non-negative, monotonically increasing, and convex on $\RR_{\geq 1}$.
Thus, \eqref{eq:lb2} follows by applying \Cref{lem:interpol} to $f(x) \Def x^{y+1} - x^y$,
since $\NN_{>0} \subseteq \RR_{\geq 1} \setminus (c, c+1)$.

If $y \geq d$, then $\ipc{2}{d}{d+1}(x,y) = x^{d} + (x^{d+1} - x^{d}) \cdot (y - d)$ is monotonically increasing in the first occurrence of $x^{d}$, and in $x^{d+1} - x^{d}$.
Thus, by approximating $x^{d}$ and $x^{d+1} - x^{d}$ with their lower bounds from \eqref{eq:lb2a} and \eqref{eq:lb2}, \eqref{eq:lb1} yields
\begin{align}
	\nonumber \forall x \in \NN_{>0}, y \geq d.\ x^y & \geq \ipc{1}{c}{c+1}(x,d) + (\ipc{1}{c}{c+1}(x,d+1) - \ipc{1}{c}{c+1}(x,d)) \cdot (y-d) \\
	\label{eq:lb3}
	                                              & = \ipfc{c}{c+1}{d}{d+1}(x,y).
\end{align}
So dually to \ref{eq:il2}, the set $\LL$ contains the lemma
\begin{equation}
	\tag{\protect{\ensuremath{\SC{ip}_3}}}\label{eq:il5}
	s \geq 1 \land t \geq d \implies \exp(s,t) \geq \ipfc{c}{c+1}{d}{d+1}(s,t)
\end{equation}
which is valid due to \eqref{eq:lb3} and rules out any counterexample with ${\sem{\exp}}(c,d) < c^d$, as $\ipfc{c}{c+1}{d}{d+1}(c,d) = c^d$.

\begin{lemma}
	\label{lem:SoundnessInterpolationLower}
	Let $c,d \in \NN_{>0}$.
	Then \ref{eq:il5} is \EIA-valid.
\end{lemma}

\begin{example}[Interpolation, Lower Bounds]
	\label{ex:bilinear-interpolation-lower}
	Let $\sem{\exp(s,t)} = {\sem{\exp}}(3,9) < 3^9$, i.e., we have $c = 3$, and $d = 9$.
	Then
	\begin{align*}
		\ipc{1}{3}{4}(x,9)      & {} = 3^9 + (4^9-3^9) \cdot (x - 3)
		{} = 19683 + 242461 \cdot (x-3)                                                                            \\
		\ipc{1}{3}{4}(x,10)     & {} = 3^{10} + (4^{10}-3^{10}) \cdot (x - 3)
		{} = 59049 + 989527 \cdot (x - 3)                                                                          \\
		\ipfc{3}{4}{9}{10}(x,y) & {} = \ipc{1}{3}{4}(x,9) + (\ipc{1}{3}{4}(x,10) - \ipc{1}{3}{4}(x,9)) \cdot (y-9)
	\end{align*}
	and thus we obtain the lemma
	\[
		s \geq 1 \land t \geq 9 \implies \exp(s,t) \geq 747066 \cdot s \cdot t - 6481133 \cdot s- 2201832 \cdot t + 19108788.
	\]
	It is violated by our counterexample, as we have
	\[
		\sem{747066 \cdot s \cdot t - 6481133 \cdot s - 2201832 \cdot t + 19108788} = 3^9 > {\sem{\exp}}(3,9).
	\]
\end{example}

\subsection{Lazy Lemma Generation}
\label{subsec:lazy}
In practice, it is not necessary to compute the entire set of lemmas $\LL$.
Instead, we can stop as soon as $\LL$ contains a single lemma which is violated by the current counterexample.
However, such a strategy would result in a quite fragile implementation, as its behavior would heavily depend on the order in which lemmas are computed, which in turn depends on low-level details like the order of iteration over sets, etc.
So instead, we use the following precedence on our six kinds of lemmas:
\[
	\text{symmetry} \succ \text{monotonicity} \succ \text{bounding} \succ \text{prime}
        \approx \text{induction} \approx \text{interpolation}
\]
Then we compute all lemmas of the same kind, starting with symmetry lemmas, and we only proceed with kinds of lower precedence if none of the lemmas computed so far is violated by the current counterexample.
The motivation for the order above is as follows:
First, we prefer symmetry, monotonicity, and bounding lemmas, as there are only finitely
many of them for a finite set of relevant terms. In contrast, in principle there could be
infinitely many  prime, induction, and interpolation lemmas, so they all get the same, lowest precedence.
The reason is that the natural numbers $c, d, c^-, c^+, d^-, d^+$ that occur in these lemmas depend on the current \NIA-model $\AA$.
Symmetry lemmas obtain the highest precedence, as other kinds of lemmas depend on them for restricting $\exp(s,t)$ in the case that $s$ or $t$ is negative.
Moreover, we prefer monotonicity lemmas over bounding lemmas, as monotonicity lemmas are
linear (if the arguments of $\exp$ are linear), whereas \ref{lem:constant-3} may be
non-linear.

\begin{algorithm}[t]
	\caption{CEGAR for $\EIA$ with Lazy Lemma Generation}\label{alg2}
	\KwIn{a $\Sigma_\Int^\exp$-formula $\phi$}
	\tcp{Preprocessing}
	\Do{$\phi \neq \phi'$}{
		$\phi' \gets \phi$\;
		$\phi \gets \SC{FoldConstants}(\phi)$\;
		$\phi \gets \SC{Rewrite}(\phi)$\;
	}
	\tcp{Refinement Loop}
	\While{there is a \NIA-model $\AA$ of $\phi$}{
		\If{$\AA$ is a counterexample}{
			$\LL \gets \emptyset$\;
			\For{$\mathrm{k} \in \{ \text{Symmetry}, \text{Monotonicity}, \text{Bounding}\}$ \label{alg2:loop1}}{
				$\LL \assign \{\psi \in \SC{ComputeLemmas}(\phi, \mathrm{k}) \mid \AA \not\models \psi\}$\;
                                \lIf{$\LL \neq \emptyset$}{{\bf break}}
			}
                        \If{$\LL = \emptyset$ \label{alg2:check}}{
                          \For{$\mathrm{k} \in \{ \text{Prime}, \text{Induction}, \text{Interpolation}\}$ \label{alg2:loop2}}{
                            $\LL \assign \LL \cup \{\psi \in \SC{ComputeLemmas}(\phi, \mathrm{k}) \mid \AA \not\models \psi\}$\;
                          }
                        }
			$\phi \gets \phi \land \bigwedge \LL$
		}
		\lElse{
			\Return{$\sat$}
		}
	}
	\Return{$\unsat$}
\end{algorithm}

The improved version of Alg.~\ref{alg} is shown in Alg.~\ref{alg2}.
Note that the first inner loop in Line~\ref{alg2:loop1} breaks as soon as a lemma that is violated by the current model $\AA$ has been found, i.e., as soon as $\LL$ is non-empty.
Then due to the check in Line~\ref{alg2:check}, the second inner loop in Line~\ref{alg2:loop2} is only executed if the first inner loop failed to eliminate the current counterexample.
In contrast to the first inner loop, the second one in Line~\ref{alg2:loop2} does not break, as prime, induction, and interpolation lemmas have the same precedence.

\begin{example}[Leading Example Finished]
	\label{Leading Example -- Total Semantics}
	We now finish our leading example which, after preprocessing, looks as follows (see \Cref{ex:preprocessing}):
	\begin{align}
		x^2 > 4 \land y^2 > 4 \land \exp(x,y^2) = \exp(x,\exp(y,y)) \tag{\ref{eq:goal}}
	\end{align}
	Then our implementation generates $12$ symmetry lemmas, $4$ monotonicity lemmas, and $17$ bounding lemmas before proving unsatisfiability, including
	\[
		\eqref{ex:sym_a}, \eqref{ex:sym_b}, \eqref{ex:sym_c}, \eqref{ex:sym_d}, \eqref{ex:mon_a}, \eqref{ex:mon_b}, \eqref{ex:bound_a}, \text{ and } \eqref{ex:bound_b}.
	\]
	These lemmas suffice to prove unsatisfiability for the case $x > 2$ (the cases $x \in [-2 \twoldots 2]$ or $y \in [-2 \twoldots 2]$ are trivial).
	For example, if $y < -2$ and $\divisible_2(-y)$, we get
	\begin{align*}
		y < -2 & {}\overset{\eqref{ex:bound_b}}{\curvearrowright} \exp(-y,-y) > y^2 + 1 \overset{\eqref{ex:sym_a}}{\curvearrowright} \exp(y,-y) > y^2 + 1                                                         \\
		       & \overset{\eqref{ex:sym_d}}{\curvearrowright} \exp(y,y) > y^2 + 1 \overset{\eqref{ex:mon_a}}\curvearrowright \exp(x, \exp(y,y)) > \exp(x, y^2) \overset{\eqref{eq:goal}}{\curvearrowright} \false
	\end{align*}
	and for the cases $y>2$ and $y < -2 \land \neg\divisible_2(-y)$, unsatisfiability can be shown similarly.
	For the overall proof, $5$ more symmetry lemmas, $2$ more monotonicity lemmas, and $4$ more bounding lemmas are used.
	The remaining lemmas that are deduced by our implementation are not used in the final proof of unsatisfiability.
\end{example}

While our leading example can be solved without interpolation lemmas, in general, interpolation lemmas are a crucial ingredient of our approach.
\begin{example}
	Consider the formula
	\[
		1 < x < y \land 0 < z \land \exp(x,z) < \exp(y,z).
	\]
	Our implementation first rules out $7$ counterexamples using $7$ bounding lemmas, $15$ prime lemmas, and $9$ interpolation lemmas, before finding the model $\sem{x} = 5$, $\sem{y} = 7$, and $\sem{z} = 3$.
	Without interpolation lemmas, our implementation keeps computing prime lemmas
        indefinitely, and without prime and interpolation lemmas, our implementation
        returns \unknown.\footnote{Note that without interpolation lemmas, Alg.~\ref{alg2} may fail to eliminate a counterexample, as \Cref{ProgressTheorem} does not hold without interpolation lemmas.
        In such cases, our implementation returns \unknown.}
\end{example}
Our main soundness theorem follows from soundness of our preprocessings (\Cref{lem:preproc}) and the fact that all of our lemmas are \EIA-valid (Lemmas \ref{lem:symmetry}, \ref{lem:monotonicity}, \ref{lem:bounding}, \ref{lem:prime}, \ref{lem:induction}, \ref{lem:SoundnessInterpolationUpper}, and \ref{lem:SoundnessInterpolationLower}).
\begin{theorem}[Soundness of Alg.~\ref{alg2}]
	\label{thm:sound}
	If Alg.~\ref{alg2} returns $\sat$, then $\phi$ is satisfiable in $\EIA$.
	If Alg.~\ref{alg2} returns $\unsat$, then $\phi$ is unsatisfiable in $\EIA$.
\end{theorem}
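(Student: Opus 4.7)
The plan is to maintain an invariant that every formula $\phi$ arising in \Cref{alg} is EIA-equivalent to the input, and then to derive each direction of the theorem from the way the two loops can terminate.

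First, I would establish this invariant. For the preprocessing loop, \Cref{lem:preproc} states directly that $\SC{FoldConstants}$ and $\SC{Rewrite}$ preserve EIA-equivalence. For the refinement loop, $\phi$ is modified only in Line~\ref{refine2}, where it is conjoined with a subset of $\LL$. By \Cref{lem:symmetry}, \Cref{lem:monotonicity}, \Cref{lem:bounding}, \Cref{lem:SoundnessInterpolationUpper}, and \Cref{lem:SoundnessInterpolationLower}, every member of $\LL$ is EIA-valid, so conjoining any subset of $\LL$ preserves EIA-equivalence. Hence, at every point during the execution of \Cref{alg}, the current $\phi$ is EIA-equivalent to the input.

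For the unsat case, \Cref{alg} returns $\unsat$ only when the loop condition in Line~\ref{counterexample} fails, i.e., when the current $\phi$ has no NIA-model (with $\exp$ treated as uninterpreted). Every $\AA \in \EIA$ is by definition a $\Sigma_\Int^{\exp}$-structure whose reduct to $\Sigma_\Int$ lies in $\NIA$, so every EIA-model of $\phi$ is also a NIA-model of $\phi$ in this sense. Thus NIA-unsatisfiability of the current $\phi$ implies its EIA-unsatisfiability, and by the invariant, EIA-unsatisfiability of the input.

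For the sat case, when \Cref{alg} returns $\sat$, the solver has found a NIA-model $\AA$ of the current $\phi$ that is not a counterexample. By \Cref{def:counterexample}, this means $\sem{\exp(s,t)}^\AA = (\sem{s}^\AA)^{|\sem{t}^\AA|}$ for every subterm $\exp(s,t)$ of $\phi$. I would then construct $\AA' \in \EIA$ by keeping the universe and the interpretation of every function symbol of $\AA$ except that $\sem{\exp}^{\AA'}(c,d) \Def c^{|d|}$ for all $c,d \in \ZZ$. A straightforward induction on the structure of subterms $u$ of $\phi$ then shows $\sem{u}^{\AA'} = \sem{u}^\AA$; the only nontrivial case is $u = \exp(s,t)$, where the induction hypothesis combined with the non-counterexample property yields $\sem{u}^{\AA'} = (\sem{s}^\AA)^{|\sem{t}^\AA|} = \sem{u}^\AA$. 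Hence $\AA' \models \phi$, and by the invariant, $\AA'$ is an EIA-model of the input.

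The main obstacle is this last construction. One has to check that overwriting $\sem{\exp}^\AA$ on all arguments (including those that do not arise from subterms of $\phi$) cannot spoil the truth value of $\phi$; this is exactly what the counterexample criterion rules out, by forcing $\sem{\exp}^\AA$ and $\sem{\exp}^{\AA'}$ to agree on all arguments reachable from the term structure of $\phi$.
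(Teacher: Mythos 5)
Your proposal is correct and follows essentially the same route as the paper, which justifies the theorem by appealing to \Cref{lem:preproc} and the \EIA-validity of all lemmas (so that $\phi$ stays \EIA-equivalent throughout) and leaves the remaining details implicit. Your explicit construction of the \EIA-model $\AA'$ in the $\sat$ case, together with the induction showing that a non-counterexample's interpretation of $\exp$ agrees with $\sem{\exp}^{\EIA}$ on all arguments arising from subterms of $\phi$, is exactly the right way to fill in those details.
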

Another important property of Alg.~\ref{alg2} is that it can eliminate \emph{any} counterexample, and hence it makes progress in every iteration.
\begin{restatable}[Progress Theorem]{theorem}{progress}
	\label{ProgressTheorem}
	If $\AA$ is a counterexample and $\LL$ is computed as in Alg.~\ref{alg2}, then
	\[
		\AA \not\models \bigwedge \LL.
	\]
\end{restatable}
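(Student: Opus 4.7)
The plan is to argue by contradiction. Suppose $\AA \models \bigwedge \LL$, and let $\exp(s,t)$ be a subterm of $\phi$ witnessing that $\AA$ is a counterexample in the sense of \Cref{def:counterexample}, so that $\sem{\exp(s,t)}^\AA \neq c^{|d|}$, where $c = \sem{s}^\AA$ and $d = \sem{t}^\AA$. Write $\sem{\cdot}$ for $\sem{\cdot}^\AA$. I would derive a contradiction in three phases, mirroring the four kinds of lemmas in the order enforced by the precedence of \Cref{subsec:lazy}.

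Phase 1 (symmetry reduction to non-negative arguments). I would first pick $u \in \{s,-s\}$ and $v \in \{t,-t\}$ with $\sem{u} = |c|$ and $\sem{v} = |d|$. By \Cref{def:relevant} the term $\exp(u,v)$ is relevant, since $\phi$ contains $\exp(s,t) = \exp(\pm u, \pm v)$, and hence all lemmas generated below for $\exp(u,v)$ lie in $\LL$. Using that the symmetry lemmas $\ref{lem:symmetry-t-1}$, $\ref{lem:symmetry-t-2}$, $\ref{lem:symmetry-t-3}$ all hold in $\AA$ (otherwise we would already have a violated lemma), a case analysis on the signs of $c, d$ and on the parity of $d$ transports the inequality $\sem{\exp(s,t)} \neq c^{|d|}$ to $\sem{\exp(u,v)} \neq |c|^{|d|}$. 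The crucial identity is that $c^{|d|} = (-1)^{|d|} |c|^{|d|}$ when $c \leq 0$, which is exactly compensated by the sign flip in $\ref{lem:symmetry-t-2}$ for odd $|d|$.

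Phase 2 (degenerate bounding cases). Since $\sem{u}, \sem{v} \geq 0$, the set $\LL$ contains the bounding lemmas $\ref{lem:constant-1}$--$\ref{lem:constant-t-2}$ instantiated with $\exp(u,v)$. In each of the cases $|d| = 0$, $|d| = 1$, $|c| = 0 \land |d| > 0$, or $|c| = 1$, the premise of one of these lemmas holds and its conclusion fixes $\exp(u,v)$ to $|c|^{|d|}$ (which equals $1$, $|c|$, $0$, or $1$ respectively), contradicting $\sem{\exp(u,v)} \neq |c|^{|d|}$.

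Phase 3 (non-degenerate interpolation). It remains to handle $|c|, |d| \geq 2$. Here $\sem{u} > 0$ and $\sem{v} > 0$, so by the construction of \Cref{sec:interpolation}, applied to $\exp(u,v)$ with the first interpolation point $(c',d') = (|c|,|d|)$, the set $\LL$ contains an interpolation lemma for $\exp(u,v)$. If $\sem{\exp(u,v)} > |c|^{|d|}$, then $\ref{eq:il2}$ degenerates (because $c^- = c^+ = |c|$ and $d^- = d^+ = |d|$) to $u = |c| \land v = |d| \implies \exp(u,v) \leq |c|^{|d|}$, which $\AA$ violates. If instead $\sem{\exp(u,v)} < |c|^{|d|}$, a direct computation using $\ip{1}{|c|}(|c|,\cdot) = |c|^{\cdot}$ shows $\ipfc{|c|}{|c|+1}{|d|}{|d|+1}(|c|,|d|) = |c|^{|d|}$, so $\ref{eq:il5}$ yields $\exp(u,v) \geq |c|^{|d|}$ at $(u,v) = (|c|,|d|)$, contradicting the assumption in $\AA$. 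The main obstacle I foresee is the bookkeeping in Phase 1: the parity-dependent sign in $c^{|d|}$ for negative $c$ must be reconciled with the sign flip in $\ref{lem:symmetry-t-2}$, so that sequentially applying the three symmetry lemmas transports the strict inequality \emph{with the right sign} to the non-negative representative $\exp(u,v)$; once this is established, Phases 2 and 3 are essentially immediate from the definitions of the lemmas and the specific values at the interpolation anchor point.
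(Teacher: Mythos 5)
Your proposal is correct and follows essentially the same route as the paper's proof: a case split on the signs of $\sem{s},\sem{t}$, using the symmetry lemmas to transfer the violation to the non-negative representative $\exp(\pm s,\pm t)$ (you phrase this contrapositively, the paper assumes the representative is correctly interpreted and derives a violated symmetry lemma — logically the same), then bounding lemmas for the degenerate cases and the fact that $\ipf{c}{d}(c,d)=c^d$ and $\ipfc{c}{c+1}{d}{d+1}(c,d)=c^d$ for the rest. The only cosmetic difference is that you dispatch $|c|=1$ and $|d|=1$ via \ref{lem:constant-2}/\ref{lem:constant-t-2}, whereas the paper lets the interpolation lemmas cover all positive arguments.
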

\begin{proof}
  Let $\AA \in \EIA$ be arbitrary but fixed and let $\sem{\ldots}$ denote $\sem{\ldots}^\AA$.
  Since $\AA$ is a counterexample, there is a subexpression $\exp(s,t)$ of $\phi$ such that $\sem{\exp(s,t)} \neq \sem{\exp(s,t)}^\EIA$.
  We now show that $\AA$ violates at least one of our lemmas.

  First assume $\sem{s}, \sem{t} \in \NN$.
  Then
  \begin{itemize}
  \item $t = 0$ implies $\sem{\ref{lem:constant-1}} = \false$ and
  \item $s = 0 \land t \neq 0$ implies $\sem{\ref{lem:constant-t-1}} = \false$.
  \end{itemize}
  Hence, assume $\sem{s}, \sem{t} \in \NN_{>0}$.
  Then
  \begin{itemize}
  \item $\sem{\exp(s,t)} > \sem{\exp(s,t)}^\EIA$ implies $\sem{\ref{eq:il2}} = \false$ (as remarked after \ref{eq:il2}), and
  \item $\sem{\exp(s,t)} < \sem{\exp(s,t)}^\EIA$ implies $\sem{\ref{eq:il5}} = \false$ (as remarked after \ref{eq:il5}).
  \end{itemize}
  Now assume $\sem{s} < 0$ and $\sem{t} \geq 0$.
  Since $\exp(-s,t)$ is relevant, the set $\LL$ contains bounding and interpolation lemmas for $\exp(-s,t)$.
  Hence, if $\sem{\exp(-s,t)} \neq \sem{\exp(-s,t)}^\EIA$, then one of these lemmas is violated by $\AA$, as argued above.
  Thus, assume $\sem{\exp(-s,t)} = \sem{\exp(-s,t)}^\EIA$.
  Then
  \begin{itemize}
  \item if $\sem{t}$ is even, then $\sem{\ref{lem:symmetry-t-1}} = \false$, and
  \item if $\sem{t}$ is odd, then $\sem{\ref{lem:symmetry-t-2}} = \false$.
  \end{itemize}
  Next assume $\sem{s} \geq 0$ and $\sem{t} < 0$.
  Since $\exp(s,-t)$ is relevant, the set $\LL$ contains bounding and interpolation lemmas for $\exp(s,-t)$.
  Hence, if $\sem{\exp(s,-t)} \neq \sem{\exp(s,-t)}^\EIA$, then one of these lemmas is violated by $\AA$, as argued above.
  Thus, assume $\sem{\exp(s,-t)} = \sem{\exp(s,-t)}^\EIA$.
  Then $\sem{\ref{lem:symmetry-t-3}} = \false$.

  Finally, assume $\sem{s_1} < 0$ and $\sem{t_1} < 0$.
  Since $\exp(-s,-t)$ is relevant, the set $\LL$ contains bounding and interpolation lemmas for $\exp(-s,-t)$.
  Hence, if $\sem{\exp(-s,-t)} \neq \sem{\exp(-s,-t)}^\EIA$, then one of these lemmas is violated by $\AA$, as argued above.
  Thus, assume $\sem{\exp(-s,-t)} = \sem{\exp(-s,-t)}^\EIA$.

  We only consider the case that $\sem{t}$ is even (the case that $\sem{t}$ is odd works analogously).
  Assume $\sem{\ref{lem:symmetry-t-1}} = \true$ and $\sem{\ref{lem:symmetry-t-3}} = \true$.
  Then we get
  \begin{align*}
    \sem{\exp(s,t)}^\EIA & {} = \sem{\exp(-s,t)}^\EIA \tag{as $t$ is even}                      \\
                         & {} = \sem{\exp(-s,-t)}^\EIA \tag{by definition of $\sem{\exp}^\EIA$} \\
                         & {} = \sem{\exp(-s,-t)}                                               \\
                         & {} = \sem{\exp(s,-t)} \tag{\ref{lem:symmetry-t-1}}                   \\
                         & {} = \sem{\exp(s,t)} \tag{\ref{lem:symmetry-t-3}},
  \end{align*}
  which contradicts $\sem{\exp(s,t)} \neq \sem{\exp(s,t)}^\EIA$.
  Thus, we have $\sem{\ref{lem:symmetry-t-1}} = \false$ or $\sem{\ref{lem:symmetry-t-3}}
  = \false$.
\end{proof}

Despite \Cref{thm:sound,ProgressTheorem}, \EIA is of course undecidable, and hence Alg.~\ref{alg2} is incomplete.
For example, it does not terminate for the input formula
\begin{equation}
	\label{incompleteFormula}
	x \geq y \geq 0 \land \exp(2,x) \neq \exp(2,x-y) \cdot \exp(2,y).
\end{equation}
Here, to prove unsatisfiability,\footnote{Note that induction lemmas are not sufficient for proving unsatisfiability of \eqref{incompleteFormula}, as the difference between the exponents in induction lemmas is always a constant, whereas it is non-constant in \eqref{incompleteFormula}.} one would need a rewrite rule like
\[
  \exp(x,y) \cdot \exp(x,z) \to \exp(x,y+z),
\]
but such a rule would be unsound in our setting, as the right-hand side would need to be $\exp(x,|y|+|z|)$ instead.
However, to avoid introducing non-linear terms like $|y|$ and $|z|$, we did not include
such a rewrite rule.
Thus, Alg.~\ref{alg2} would refine the formula \eqref{incompleteFormula} infinitely often.

Note that monotonicity, prime, and induction lemmas are important, even though they are not required to prove \Cref{ProgressTheorem}.
The reason is that \emph{all} (usually infinitely many) counterexamples must be eliminated to prove $\unsat$.
For instance, reconsider \Cref{Leading Example -- Total Semantics}, where the monotonicity lemma \eqref{ex:mon_a} eliminates infinitely many counterexamples with $\sem{\exp(x, \exp(y,y))} \leq \sem{\exp(x, y^2)}$.
In contrast, \Cref{ProgressTheorem} only guarantees that every single counterexample can be eliminated.
Consequently, our implementation does not terminate on our leading example if monotonicity lemmas are disabled.
Similarly, our implementation fails to solve \Cref{ex:prime} and \Cref{ex:induction} without prime and induction lemmas, respectively.

\section{Phasing}
\label{sec:phasing}

\begin{algorithm}[t]
	\caption{CEGAR for $\EIA$ with Phasing}\label{alg3}
	\KwIn{a $\Sigma_\Int^\exp$-formula $\phi$}
	\tcp{Preprocessing}
	\Do{$\phi \neq \phi'$}{
		$\phi' \gets \phi$\;
		$\phi \gets \SC{FoldConstants}(\phi)$\;
		$\phi \gets \SC{Rewrite}(\phi)$\;
	}
        $\mathit{phase} \gets \sat$\;
        $b \gets 1$\;
	\tcp{Refinement Loop}
        \While{$\top$}{
          \leIf{$\mathit{phase} = \sat$\label{alg3:bound}}{$\xi \gets \phi \land \{-2^b \leq t \leq 2^b \mid \exp(s,t) \text{ is relevant}\}$}{$\xi \gets \phi$}
          \If{there is a \NIA-model $\AA$ of $\xi$}{
            \If{$\AA$ is a counterexample}{
              \If{$\mathit{phase} = \unsat$\label{alg3:discard}} {
                $\mathit{phase} \gets \sat$\;\label{alg3:switch-sat}
                $b\inc$\label{alg3:inc}
              }
              \Else{
                \For{$\mathrm{k} \in \{ \text{Symmetry}, \text{Monotonicity},
                  \text{Bounding}, \text{Prime} \}$}{
                  $\LL \assign \{\psi \in \SC{ComputeLemmas}(\phi, \mathrm{k}) \mid \AA \not\models \psi\}$\;
                  \lIf{$\LL \neq \emptyset$}{{\bf break}}
                }
                \If{$\LL = \emptyset$}{
                  \For{$\mathrm{k} \in \{\text{Induction}, \text{Interpolation}\}$}{
                    $\LL \assign \LL \cup \{\psi \in \SC{ComputeLemmas}(\phi, \mathrm{k}) \mid \AA \not\models \psi\}$\;
                  }
                }
                $\phi \gets \phi \land \bigwedge \LL$
              }
            }
            \lElse{
              \Return{$\sat$}
            }
          } \lElseIf{$\mathit{phase} = \unsat$}{
            \Return{$\unsat$}
          } \lElse{
            $\mathit{phase} \gets \unsat$\label{alg3:switch-unsat}
          }
        }
\end{algorithm}

When investigating the behavior of our implementation for satisfiable examples where it
failed to find a solution, we frequently observed the following problem:
Even though a ``small'' solution exists where the exponents (i.e., the second arguments of
the occurring $\exp$-terms) have small absolute values, the underlying SMT solver returned
candidate models with exponents whose absolute values were quite large.
Then interpolation lemmas with huge coefficients were generated, as these coefficients grow exponentially w.r.t.\ the second argument of $\exp$.
Afterwards, the underlying solver took too long when searching for a new candidate model.

To counter this phenomenon, we developed \emph{phasing}, a technique to prefer ``small'' over ``large'' solutions, see Alg.~\ref{alg3}.
The idea of phasing is to alternate between a \emph{$\sat$-phase} and an \emph{$\unsat$-phase}.
In the $\sat$-phase, the goal is to find small candidate models.
To this end, we impose a bound on the absolute values of the exponents to block large models (Line~\ref{alg3:bound}).
If we fail to find a candidate model in the $\sat$-phase, then we switch to the $\unsat$-phase (Line~\ref{alg3:switch-unsat}).
In the $\unsat$-phase, our goal is prove unsatisfiability, so we do not impose any bounds on the exponents in Line~\ref{alg3:bound}.
If we fail to prove unsatisfiability, we discard the found model
  (unless it can be lifted to an \EIA-model) in Line~\ref{alg3:discard} and switch back to the $\sat$-phase (Line~\ref{alg3:switch-sat}), but with a higher bound (Line~\ref{alg3:inc}).

Note that the bound grows exponentially w.r.t.\ the number of phase changes (Lines~\ref{alg3:bound} and \ref{alg3:inc}).
Thus, if there are no small models, then the size of the considered models grows rapidly, so that we can still find a solution quickly.

\section{Related Work}
\label{sec:related}

The most closely related work applies \emph{incremental linearization} to \NIA, or to non-linear real arithmetic with transcendental functions (\NRAT).
Like our approach, incremental linearization is an instance of the CEGAR paradigm:
An initial abstraction (where certain predefined functions are considered as uninterpreted functions) is refined via linear lemmas that rule out the current counterexample.

Our approach is inspired by, but differs significantly from the approach for linearization of \NRAT by \textcite{cimatti18a}.
There, non-linear polynomials are linearized as well, whereas we leave the handling of polynomials to the backend solver.
Moreover, \citeauthor{cimatti18a} use linear lemmas only, whereas we also use bilinear lemmas.
Furthermore, \citeauthor{cimatti18a} fix the base to Euler's number $e$, whereas we consider a binary version of exponentiation.

The only lemmas of \textcite{cimatti18a}
that easily carry over to our approach are monotonicity lemmas.
While \citeauthor{cimatti18a} also use symmetry lemmas, they express properties of the sine function, i.e., they are fundamentally different from ours.
Our bounding lemmas are related to the ``lower bound'' and ``zero'' lemmas of \citeauthor{cimatti18a}, but there, $\lambda x.\ e^x$ is trivially bounded by $0$.
Interpolation lemmas are related to the ``tangent'' and ``secant lemmas'' of \citeauthor{cimatti18a}.
However, tangent lemmas make use of first derivatives, so they are not expressible with integer arithmetic in our setting, as we have $\frac{\deriv}{\deriv y} x^y = x^y \cdot \ln x$.
Secant lemmas are essentially obtained by linear interpolation, so our interpolation lemmas can be seen as a generalization of secant lemmas to binary functions.
A preprocessing by rewriting, prime lemmas, or induction lemmas are not considered by \citeauthor{cimatti18a}.

\textcite{cimatti18} also applied incremental linearization to \NIA.
Here, they used similar lemmas as for
NRAT \cite{cimatti18a}, so they differ again fundamentally from ours.

Further existing approaches for \NRAT are based on interval propagation \cite{isat3,dreal}.
As observed by \textcite{cimatti18a}, interval propagation effectively computes a piecewise \emph{constant} approximation, which is less expressive than our bilinear approximations.

In recent years, a novel approach for \NRAT based on the \emph{topological degree test} has been proposed \cite{cimatti22,LippariniR23}.
Its strength is finding irrational solutions more often than other approaches for \NRAT.
Hence, this line of work is orthogonal to ours.

\EIA could also be tackled by combining \NRAT techniques with branch-and-bound, but the following example shows that doing so is not promising.
\begin{example}
	\label{ex:related}
	Consider the formula $x = \exp(3, y) \land y > 0$.
	To handle it with existing solvers, we have to encode it using the natural exponential function:
	\begin{equation}
		\label{eq:e-enconding}
		e^z = 3 \land x = e^{y \cdot z} \land y > 0
	\end{equation}
	Here $x$ and $y$ range over the integers and $z$ ranges over the reals.
	Any model of \eqref{eq:e-enconding} satisfies $z = \ln 3$, where $\ln 3$ is irrational.
	As finding such models is challenging, the leading tools \tool{MathSat} \cite{mathsat}
	and \tool{CVC5} \cite{cvc5} fail for $e^z = 3$.
\end{example}
\tool{MetiTarski} \cite{metitarski} integrates decision procedures for real closed fields and approximations for transcendental functions into the theorem prover \tool{Metis} \cite{metis} to prove theorems about the reals.
In a related line of work, \tool{iSAT3} \cite{isat3} has been coupled with \tool{SPASS} \cite{spass}.
Clearly, these approaches differ fundamentally from ours.

Recently, the
complexity of decidable extensions of linear integer arithmetic with exponentiation has been investigated \cite{LIAE,icalp24,ouaknine25}.
\textcite{LIAE} consider \emph{Semenov arithmetic}, which extends LIA with a function $\lambda x. c^{|x|}$ for some fixed natural number $c > 1$.
Thus, it is equivalent to \EIA with quantifiers, but without the functions ``$\cdot$'',
``$\div$'', and ``$\mod$'', and where the first argument of all occurrences of $\exp$ must be the same constant.
In contrast, \textcite{icalp24} consider only quantifier-free conjunctions of literals from Semenov arithmetic.
Integrating such decision procedures into our approach is an interesting direction for future work.
The technique of \citeauthor{icalp24} is more efficient than the one of \citeauthor{LIAE}, but as it can only handle conjunctions of literals, it needs to be implemented as a theory solver, i.e., \emph{within} an SMT solver.
In contrast, our approach can be implemented \emph{on top of} an SMT solver, which is easier, as it does not require detailed knowledge of the internals of the SMT solver.
Thus, the approach of \citeauthor{LIAE} seems to be a better fit for our framework.

\textcite{ouaknine25} showed that decidability of the existential fragment of an extension of Semenov arithmetic with a second function $\lambda x. d^{|x|}$ (where $c$ and $d$ are multiplicatively independent) would lead to mathematical breakthroughs.
So as it stands, we cannot hope for a complete solver as soon as we go beyond Semenov arithmetic.

An alternative way to solve SMT problems with exponentiation is to provide the recursive definition of $\exp$ as universally quantified axioms.
Then SMT solvers with support for quantifiers try to instantiate these axioms suitably in order to derive a conflict.
To find suitable instantiations, various heuristics have been proposed \cite{ematching,conflict-based-instantiation,model-based-instantiation,enumerative-instantiation,counterexample-guided-instantiation}.
However, such approaches are only suitable for proving \emph{un}satisfiability, and for that purpose, our evaluation shows that our approach outperforms quantifier instantiation techniques, see \Cref{sec:purrs}.

\section{Implementation and Evaluation}
\label{sec:evaluation}

We implemented our approach in our novel tool \swinezzz, which is based on \tool{Z3} 4.14.1 \cite{z3}.
For more information on \swinezzz, including precompiled releases for all platforms, we refer to
\cite{swine}.

We compare \swinezzz with the default configuration of the original implementation of \textcite{conference}, called \swinel in the sequel.
In contrast to \swinez, \swinel does not support prime lemmas, induction lemmas, and phasing.
While \swinezzz directly uses the API of \tool{Z3}, 
\swinel is based on \tool{SMT-Switch} \cite{smt-switch}, a library that offers a unified interface for various SMT solvers.
By default, \swinel also uses \tool{Z3}.
For more information on \swinel, we refer to \cite{swine}.

\subsection{Benchmarks}

To evaluate our approach, we synthesized a large collection of \EIA problems from verification benchmarks for safety, termination, and complexity analysis.
More precisely, we ran our verification tool \tool{LoAT} \cite{loat} on the benchmarks for \emph{linear Constrained Horn Clauses (CHCs)} with \emph{linear integer arithmetic} from the \href{https://chc-comp.github.io}{\emph{CHC Competitions}}
2022 and 2023
as well as on the benchmarks for \emph{Termination} and \emph{Complexity of Integer Transition Systems} from the \href{https://termination-portal.org/wiki/TPDB}{\emph{Termination Problems Database (TPDB)}}, the benchmark set of the \href{https://termination-portal.org/wiki/Termination_Competition}{\emph{Termination and Complexity Competition}}~\cite{termcomp}, and extracted all SMT problems with exponentiation that \tool{LoAT} created while analyzing these benchmarks.
Afterwards, we removed duplicates.

The resulting benchmark set consists of 4627 SMT problems (called \emph{\tool{LoAT} Problems} below), which are available online \cite{swine-web}:
\begin{itemize}
	\item 669 problems that resulted from the benchmarks of the CHC Competition '22 (called \emph{CHC Comp '22 Problems} below)
	\item 158 problems that resulted from the benchmarks of the CHC Competition '23 (\emph{CHC Comp '23 Problems})
	\item 3146 problems that resulted from the complexity benchmarks of the TPDB (\emph{Complexity Problems})
	\item 654 problems that resulted from the termination benchmarks of the TPDB (\emph{Termination Problems})
\end{itemize}

Unfortunately, these sets do not contain challenging unsatisfiable benchmarks: As shown in
the evaluation of \textcite{conference}, the unsatisfiable instances can mostly be solved
without reasoning about exponentials at all.
More precisely, most of these benchmarks are not just unsatisfiable in \EIA (where the interpretation of $\exp$ is fixed), but they are also unsatisfiable in \NIA, where $\exp$ is an uninterpreted function.
To obtain challenging unsatisfiable examples, we generated one more set of benchmarks by validating the results of the recurrence solver \tool{PURRS} \cite{purrs}.
More precisely, we considered all recurrence relations from \tool{PURRS}'s test suite\footnote{\url{https://github.com/aprove-developers/LoAT-purrs/blob/master/tests/recurrences}} where both the recurrence relation and its expected solution can be expressed with polynomials and exponentials.
Consider such a recurrence relation
\begin{equation}
  \label{eq:eval-rec}
  f(n) = t
\end{equation}
where $f$ is the inductively defined function, i.e., the term $t$ contains subterms of the form $f(n-c)$, $c \in \NN_{>0}$.
Let $k$ be the maximal natural number such that $f(n-k)$ is a subterm of $t$.
As an example, consider the recurrence relation \eqref{eq:rec} in \Cref{ex:induction}
where $t$ is $2 \cdot f(n-1) + 2^n$ and thus, $k = 1$.
Moreover, let $s$ be the expected solution for \eqref{eq:eval-rec} given in the test suite.
For such expected solutions, we always have $n \in \VV(s) \subseteq \VV(t)$, and for all
subterms $f(q)$ of $s$, we have $q \in \{0,\ldots,k-1\}$.
Here, $\VV(s)$ and $\VV(t)$ denote the sets of variables occurring in $s$ and $t$,
respectively. For the solution \eqref{eq:sol} in \Cref{ex:induction}, we have
$s = (f(0) + n) \cdot 2^n$, i.e., here the only $f$-subterm of $s$ is $f(0)$. 
To construct new benchmarks, we considered
the following SMT problem:
\begin{equation}
  \label{eq:rec-smt}
  n \geq k \land s_\exp \neq t_\exp[f(n-i) / s_\exp[n/n-i] \mid 1 \leq i \leq k]
\end{equation}
where $s_\exp$ and $t_\exp$ result from $s$ and $t$ by replacing all exponentials (i.e.,
all subterms of the form $p^q$ where $q \notin \ZZ$) with $\exp(p,q)$.
Moreover, $s_\exp[n/n-i]$ results from $s_\exp$ by replacing $n$ with $n-i$ and
$t_\exp[f(n-i) / s_\exp[n/n-i] \mid 1 \leq i \leq k]$ results from $t_\exp$ by replacing
each subterm $f(n-i)$ with $s_\exp[n/n-i]$, for all $1 \leq i \leq k$.
In this way, we obtained the formula \eqref{recExBenchmark}  in  \Cref{ex:induction}.
If \eqref{eq:rec-smt} is unsatisfiable, then the given solution is correct.
More precisely, let $\vec{x} = \VV(s) \setminus \{n\}$.
Then unsatisfiability of \eqref{eq:rec-smt} implies that the equation \eqref{eq:eval-rec} holds for all $n \in \NN$, all $\vec{x} \in \ZZ^*$, and all $c_0,\ldots,c_{k-1} \in \ZZ$ when interpreting $f$ as
\[
  n \mapsto
  \begin{cases}
    c_n & \text{if } n < k \\
    s[f(i) / c_i \mid 0 \leq i < k] & \text{otherwise.}
  \end{cases}
\]
In this way, we obtained $37$ hard unsatisfiable benchmarks (called \emph{\tool{PURRS} Problems} below), which are also available online \cite{swine-web}.

\subsection{Evaluation on the \tool{LoAT} Problems}
\label{sec:loat}

We first report on our experiments on the \tool{LoAT} Problems.
To evaluate the impact of the different components of our approach, we tested with
configurations where we disabled rewriting, symmetry lemmas, monotonicity lemmas, bounding lemmas, prime lemmas, induction lemmas, interpolation lemmas, or
phasing.
All experiments were performed on the \href{https://help.itc.rwth-aachen.de/service/rhr4fjjutttf/article/fbd107191cf14c4b8307f44f545cf68a/}{CLAIX-2023-HPC nodes} of the RWTH Uni\-ver\-si\-ty High Performance Computing Cluster\footnote{\url{https://help.itc.rwth-aachen.de/service/rhr4fjjutttf}} with a memory limit of 10560 MiB ($\approx$~11GB) and a timeout of 10~s per example.
We chose a small timeout, as \tool{LoAT} usually has to discharge many SMT problems to solve a single verification task.
So in our setting, each individual SMT problem should be solved quickly.

\begin{figure}[p]
  \begin{minipage}{0.5\textwidth}
    \resizebox{\textwidth}{!}{%
      \begin{tabular}{|c|c|c|c|c|c|}
        \hline solver          & configuration            & $\sat$ & $\unsat$ & $\unknown$ \\
        \hline\hline \swinez   & \multirow{2}{*}{default} & 296    & 373      & 0          \\
        \hhline{-~----}
        \swinel                &                          & 296    & 373      & 0          \\
        \hline \hline \multirow{8}{*}{\swinez}
                               & no rewriting             & 296    & 373      & 0          \\
        \hhline{~-----}
                               & no symmetry              & 296    & 373      & 0          \\
        \hhline{~-----}
                               & no monotonicity          & 296    & 373      & 0          \\
        \hhline{~-----}
                               & no bounding              & 111    & 373      & 185        \\
        \hhline{~-----}
                               & no prime                 & 296    & 373      & 0          \\
        \hhline{~-----}
                               & no induction             & 296    & 373      & 0          \\
        \hhline{~-----}
                               & no interpolation         & 252    & 373      & 44         \\
        \hhline{~-----}
                               & no phasing               & 296    & 373      & 0          \\
        \hline
      \end{tabular}
    }
    \caption{CHC Comp '22, Results}
    \label{tab1}
  \end{minipage}
  \vspace{1em}
  \begin{minipage}{0.4\textwidth}
    \begin{tikzpicture}[scale=0.65]
      \begin{axis}[ legend pos=south east, xlabel=runtime in $\nicefrac{1}{10}$ seconds, ylabel=solved instances, ymin=500 ]
        \addplot[color=black,solid,thick] table[col sep=comma,header=false,x index=0,y index=1] {swine_z3_CHC_Comp_22_LIA_Lin.csv}; \addlegendentry{\swinez}
        \addplot[color=violet,densely dashed,thick] table[col sep=comma,header=false,x index=0,y index=1] {swine_CHC_Comp_22_LIA_Lin.csv}; \addlegendentry{\swinel}
      \end{axis}
    \end{tikzpicture}
    \caption{CHC Comp '22, Runtime}
    \label{rt1}
  \end{minipage}
  \begin{minipage}{0.5\textwidth}
    \resizebox{\textwidth}{!}{%
      \begin{tabular}{|c|c|c|c|c|c|}
        \hline solver          & configuration            & $\sat$ & $\unsat$ & $\unknown$ \\
        \hline\hline \swinez   & \multirow{2}{*}{default} & 87     & 71       & 0          \\
        \hhline{-~----}
        \swinel                &                          & 87     & 71       & 0          \\
        \hline \hline \multirow{8}{*}{\swinez}
                               & no rewriting             & 87     & 71       & 0          \\
        \hhline{~-----}
                               & no symmetry              & 87     & 71       & 0          \\
        \hhline{~-----}
                               & no monotonicity          & 87     & 71       & 0          \\
        \hhline{~-----}
                               & no bounding              & 79     & 71       & 8          \\
        \hhline{~-----}
                               & no prime                 & 87     & 71       & 0          \\
        \hhline{~-----}
                               & no induction             & 87     & 71       & 0          \\
        \hhline{~-----}
                               & no interpolation         & 38     & 71       & 49        \\
        \hhline{~-----}
                               & no phasing               & 87     & 71       & 0          \\
        \hline
      \end{tabular}
    }
    \caption{CHC Comp '23, Results}
    \label{tab2}
  \end{minipage}
  \vspace{1em}
  \begin{minipage}{0.4\textwidth}
    \begin{tikzpicture}[scale=0.65]
      \begin{axis}[ legend pos=south east, xlabel=runtime in $\nicefrac{1}{10}$ seconds, ylabel=solved instances, ymin=100 ]
        \addplot[color=black,solid,thick] table[col sep=comma,header=false,x index=0,y index=1] {swine_z3_CHC_Comp_23_LIA_Lin.csv}; \addlegendentry{\swinez}
        \addplot[color=violet,densely dashed,thick] table[col sep=comma,header=false,x index=0,y index=1] {swine_CHC_Comp_23_LIA_Lin.csv}; \addlegendentry{\swinel}
      \end{axis}
    \end{tikzpicture}
    \caption{CHC Comp '23, Runtime}
    \label{rt2}
  \end{minipage}
  \begin{minipage}{0.5\textwidth}
    \resizebox{\textwidth}{!}{%
      \begin{tabular}{|c|c|c|c|c|c|}
        \hline solver          & configuration            & $\sat$ & $\unsat$ & $\unknown$ \\
        \hline\hline \swinez   & \multirow{2}{*}{default} & 1355   & 1789     & 2          \\
        \hhline{-~----}
        \swinel                &                          & 1299   & 1789     & 58         \\
        \hline \hline \multirow{8}{*}{\swinez}
                               & no rewriting             & 1195   & 1789     & 162        \\
        \hhline{~-----}
                               & no symmetry              & 951    & 1789     & 406        \\
        \hhline{~-----}
                               & no monotonicity          & 1355   & 1789     & 2          \\
        \hhline{~-----}
                               & no bounding              & 890    & 1789     & 467        \\
        \hhline{~-----}
                               & no prime                 & 1355   & 1789     & 2          \\
        \hhline{~-----}
                               & no induction             & 1355   & 1789     & 2          \\
        \hhline{~-----}
                               & no interpolation         & 1241   & 1788     & 117        \\
        \hhline{~-----}
                               & no phasing               & 1271   & 1789     & 86         \\
        \hline
      \end{tabular}
    }
    \caption{Complexity, Results}
    \label{tab3}
  \end{minipage}
  \vspace{1em}
  \begin{minipage}{0.4\textwidth}
    \begin{tikzpicture}[scale=0.65]
      \begin{axis}[ legend pos=south east, xlabel=runtime in $\nicefrac{1}{10}$ seconds, ylabel=solved instances, ymin=2000 ]
        \addplot[color=black,solid,thick] table[col sep=comma,header=false,x index=0,y index=1] {swine_z3_TPDB_ITS_Complexity.csv}; \addlegendentry{\swinez}
        \addplot[color=violet,densely dashed,thick] table[col sep=comma,header=false,x index=0,y index=1] {swine_TPDB_ITS_Complexity.csv}; \addlegendentry{\swinel}
      \end{axis}
    \end{tikzpicture}
    \caption{Complexity, Runtime}
    \label{rt3}
  \end{minipage}
  \begin{minipage}{.5\textwidth}
    \resizebox{\textwidth}{!}{%
      \begin{tabular}{|c|c|c|c|c|c|}
        \hline solver          & configuration            & $\sat$ & $\unsat$ & $\unknown$ \\
        \hline\hline \swinez   & \multirow{2}{*}{default} & 223    & 431      & 0          \\
        \hhline{-~----}
        \swinel                &                          & 223    & 431      & 0          \\
        \hline \hline \multirow{8}{*}{\swinez}
                               & no rewriting             & 223    & 431      & 0          \\
        \hhline{~-----}
                               & no symmetry              & 223    & 431      & 0          \\
        \hhline{~-----}
                               & no monotonicity          & 223    & 431      & 0          \\
        \hhline{~-----}
                               & no bounding              & 190    & 429      & 35         \\
        \hhline{~-----}
                               & no prime                 & 223    & 431      & 0          \\
        \hhline{~-----}
                               & no induction             & 223    & 431      & 0          \\
        \hhline{~-----}
                               & no interpolation         & 155    & 429      & 70         \\
        \hhline{~-----}
                               & no phasing               & 223    & 431      & 0          \\
        \hline
      \end{tabular}
    }
    \caption{Termination, Results}
    \label{tab4}
  \end{minipage}
  \begin{minipage}{0.4\textwidth}
    \begin{tikzpicture}[scale=0.65]
      \begin{axis}[ legend pos=south east, xlabel=runtime in $\nicefrac{1}{10}$ seconds, ylabel=solved instances, ymin=500 ]
        \addplot[color=black,solid,thick] table[col sep=comma,header=false,x index=0,y index=1] {swine_z3_TPDB_ITS_Termination.csv}; \addlegendentry{\swinez}
        \addplot[color=violet,densely dashed,thick] table[col sep=comma,header=false,x index=0,y index=1] {swine_TPDB_ITS_Termination.csv}; \addlegendentry{\swinel}
      \end{axis}
    \end{tikzpicture}
    \caption{Termination, Runtime}
    \label{rt4}
  \end{minipage}
\end{figure}

The results can be seen in \Cref{tab1,tab2,tab3,tab4}.
All but two of the $4627$ benchmarks can be solved by \swinez.
Most configurations, including \swinel, can solve all CHC Comp and Termination Problems.
The only exceptions are the configurations without bounding or interpolation lemmas, which clearly shows their importance.

The Complexity Problems (\Cref{tab3}) are by far the hardest, and thus also the most interesting.
Here, \swinez clearly outperforms \swinel (2 vs.\ 58 unsolved instances).
Moreover, \Cref{tab3} shows that all components of our approach except for monotonicity
lemmas, prime lemmas, and induction lemmas are crucial for the performance of \swinez on
this benchmark set.
Thus, \emph{phasing} is
the main reason why \swinez is superior to \swinel on the satisfiable
instances from our benchmark suite.

The fact that monotonicity lemmas are of little importance here is not surprising:
Monotonicity lemmas were already of little use when \tool{Z3} was used as backend solver
in the evaluation of \swinel of
\textcite{conference}.
However, these lemmas significantly improved the performance of \swinel with \tool{CVC5} \cite{cvc5} as backend solver.
Hence, the usefulness of monotonicity lemmas depends on the details of the underlying solver.
Moreover, recall that our leading example cannot be solved without monotonicity lemmas, see \Cref{Leading Example -- Total Semantics}.

It is also not surprising that induction lemmas do not contribute much to the performance
of \swinez on this benchmark set, as their main purpose is proving unsatisfiability, see \Cref{sec:purrs}.

However, it is slightly disappointing that prime lemmas are not needed for the \tool{LoAT} problems.
Nevertheless, we believe that prime factorizations are of such fundamental relevance that
prime lemmas are likely to be useful for benchmarks from other sources in the future.

The runtime of \swine can be seen in \Cref{rt1,rt2,rt3,rt4}.
Most instances can be solved in a fraction of a second, as desired for our use case.
On the CHC Comp '23 and the Termination Problems, \swinez and \swinel are equally efficient.
On the CHC Comp '22 and the Complexity Problems, \swinez is slightly faster than
\swinel.
We refer to \textcite{swine-web} for more details on our evaluation.

\subsection{Evaluation on the \tool{PURRS} Problems}
\label{sec:purrs}

We continue with the evaluation on the \tool{PURRS} Problems, where we used a wall clock timeout of $300$s.
All of them are unsatisfiable (provided that the test cases from \tool{PURRS}'s test suite are correct), and most of these problems require inductive reasoning, so they are particularly suitable to illustrate the usefulness of induction lemmas.
Hence, we considered two configurations of \swinez: The default configuration (where all features are enabled), and a configuration without induction lemmas.

In contrast to those \tool{LoAT} Problems that are unsatisfiable,
the \tool{PURRS} Problems require reasoning about $\exp$ for proving
unsatisfiability.
An alternative approach to tackle such problems is to use standard SMT solvers, and to add the (universally quantified) recursive definition of $\exp$ explicitly.
Thus, we also considered \tool{Z3} 4.14.1, where we added the following assertions to each benchmark:
\begin{align*}
  \forall x \in \ZZ.\ & \exp(x, 0) = 1 \\
  \forall x, y \in \ZZ.\ & y > 0 \implies \exp(x,y) = x \cdot \exp(x, y-1) \\
  \forall x, y \in \ZZ.\ & y < 0 \implies \exp(x,y) = x \cdot \exp(x, y+1)
\end{align*}
Note that SMT solvers use quantifier instantiation to derive contradictions, but adding such quantified assertions prevents them from proving satisfiability.
Hence, we did not include \tool{Z3} in the evaluation on the \tool{LoAT} problems in \Cref{sec:loat}.

\begin{table}
  \begin{tabular}{|c||c|c||c|c||c|c|}
    \hline \textbf{tool} & \multicolumn{2}{c||}{\swinez} & \multicolumn{2}{c||}{\tool{Z3}} & \multicolumn{2}{c|}{\swinez} \\\hline
    \textbf{configuration} & \multicolumn{2}{c||}{default} & \multicolumn{2}{c||}{default} & \multicolumn{2}{c|}{no induction} \\\hline\hline
    \texttt{purrs01} & \checkmark & 0.45 & \checkmark & 0.66 & \clock & -- \\\hline
    \texttt{purrs02} & \checkmark & 0.40 & \checkmark & 0.66 & \clock & -- \\\hline
    \texttt{purrs03} & \clock & -- & \clock & -- & \clock & -- \\\hline
    \texttt{purrs04} & \checkmark & 0.44 & \checkmark & 0.65 & \clock & -- \\\hline
    \texttt{purrs05} & \checkmark & 0.43 & \checkmark & 0.44 & \clock & -- \\\hline
    \texttt{purrs06} & \checkmark & 0.45 & \checkmark & 0.45 & \clock & -- \\\hline
    \texttt{purrs07} & \checkmark & 0.60 & \checkmark & 31.64 & \xmark & 26.7 \\\hline
    \texttt{purrs08} & \checkmark & 0.58 & \checkmark & 0.46 & \checkmark & 0.49 \\\hline
    \texttt{purrs09} & \checkmark & 0.52 & \xmark & 104.58 & \clock & -- \\\hline
    \texttt{purrs10} & \checkmark & 0.50 & \xmark & 108.85 & \clock & -- \\\hline
    \texttt{purrs11} & \checkmark & 0.48 & \checkmark & 0.49 & \clock & -- \\\hline
    \texttt{purrs12} & \checkmark & 3.12 & \checkmark & 2.17 & \clock & -- \\\hline
    \texttt{purrs13} & \checkmark & 0.49 & \checkmark & 0.61 & \clock & -- \\\hline
    \texttt{purrs14} & \checkmark & 0.52 & \checkmark & 0.52 & \checkmark & 0.42 \\\hline
    \texttt{purrs15} & \checkmark & 0.43 & \clock & -- & \clock & -- \\\hline
    \texttt{purrs16} & \checkmark & 1.35 & \clock & -- & \clock & -- \\\hline
    \texttt{purrs17} & \checkmark & 1.42 & \clock & -- & \clock & -- \\\hline
    \texttt{purrs18} & \checkmark & 0.54 & \checkmark & 0.48 & \clock & -- \\\hline
    \texttt{purrs19} & \checkmark & 0.52 & \checkmark & 0.48 & \clock & -- \\\hline
    \texttt{purrs20} & \checkmark & 0.53 & \checkmark & 0.49 & \clock & -- \\\hline
    \texttt{purrs21} & \checkmark & 0.41 & \checkmark & 0.43 & \clock & -- \\\hline
    \texttt{purrs22} & \checkmark & 0.43 & \checkmark & 0.46 & \clock & -- \\\hline
    \texttt{purrs23} & \checkmark & 0.47 & \checkmark & 0.44 & \clock & -- \\\hline
    \texttt{purrs24} & \checkmark & 0.46 & \checkmark & 0.45 & \clock & -- \\\hline
    \texttt{purrs25} & \checkmark & 0.43 & \checkmark & 0.44 & \clock & -- \\\hline
    \texttt{purrs26} & \checkmark & 0.43 & \checkmark & 0.42 & \clock & -- \\\hline
    \texttt{purrs27} & \checkmark & 0.40 & \checkmark & 0.46 & \clock & -- \\\hline
    \texttt{purrs28} & \checkmark & 0.44 & \checkmark & 0.51 & \clock & -- \\\hline
    \texttt{purrs29} & \checkmark & 0.44 & \checkmark & 0.53 & \xmark & 239.34 \\\hline
    \texttt{purrs30} & \checkmark & 2.21 & \clock & -- & \clock & -- \\\hline
    \texttt{purrs31} & \xmark & 18.93 & \clock & -- & \xmark & 87.65 \\\hline
    \texttt{purrs32} & \xmark & 77.34 & \clock & -- & \clock & -- \\\hline
    \texttt{purrs33} & \checkmark & 0.43 & \checkmark & 0.46 & \clock & -- \\\hline
    \texttt{purrs34} & \checkmark & 0.50 & \checkmark & 0.51 & \checkmark & 0.38 \\\hline
    \texttt{purrs35} & \checkmark & 21.39 & \checkmark & 6.12 & \clock & -- \\\hline
    \texttt{purrs36} & \checkmark & 4.65 & \checkmark & 38.30 & \clock & -- \\\hline
    \texttt{purrs37} & \checkmark & 1.36 & \checkmark & 1.11 & \clock & -- \\\hline
    \hline
    \textbf{solved} & \multicolumn{2}{c||}{34} & \multicolumn{2}{c||}{28} & \multicolumn{2}{c|}{3} \\\hline
    \multicolumn{7}{c}{}\\[-.5em]
    \multicolumn{7}{c}{\checkmark: \unsat, \xmark: \unknown, \clock: timeout}
  \end{tabular}
  \caption{\tool{PURRS} Problems}
  \label{tab:purrs}
\end{table}

The results can be seen in \Cref{tab:purrs}.
With induction lemmas, \swinez solves all instances that \tool{Z3} can solve as well as 6 additional instances, and it only fails for 3 instances.
Regarding the benchmarks that are solved by both tools, \swinez is
substantially
faster in some cases (\texttt{purrs07}, \texttt{purrs36}), but sometimes it is the other way around (\texttt{purrs35}), so there is no clear picture.
Without induction lemmas, \swinez is clearly not suitable for this benchmark
set.

\section{Conclusion}

We presented the novel SMT theory \EIA, which extends the theory \emph{non-linear integer arithmetic} with integer exponentiation.
Moreover, inspired by \emph{incremental linearization} for similar extensions of \emph{non-linear real arithmetic}, we developed a CEGAR approach to solve \EIA problems.
The core idea of our approach is to regard exponentiation as an uninterpreted function and to eliminate counterexamples, i.e., models that violate the semantics of exponentiation, by generating suitable \emph{lemmas}.
Here, the use of \emph{bilinear interpolation} turned out to be crucial for proving satisfiability, both in practice (see our evaluation in \Cref{sec:evaluation}) and in theory, as interpolation lemmas are essential for being able to eliminate \emph{any} counterexample (see \Cref{ProgressTheorem}).
Finally, we evaluated the implementation of our approach in our novel tool \swinez on thousands of \EIA problems that were synthesized from verification tasks using our verification tool \tool{LoAT}, and on challenging unsatisfiable benchmarks that result from the verification of solutions for recurrence relations.
Our evaluation shows that \swinez is highly effective for both use cases.
In particular, it shows that \emph{induction lemmas} are crucial for proving
unsatisfiability, and that our new \emph{phasing} technique allows our novel
implementation \swinez to outperform its predecessor on satisfiable instances. 

Consequently, we integrated \swinez into our verification tool \tool{LoAT}, where it now serves as the default solver for all techniques that require reasoning about exponentials \cite{adcl,abmc}.

With \swinez, we provide an SMT-LIB compliant open-source solver for \EIA \cite{swine}.
In this way, we hope to attract users with applications that give rise to challenging benchmarks, and we hope that other solvers with support for integer exponentiation will follow, with the ultimate goal of standardizing \EIA.

\begin{acks}
  This work has been funded by the Deutsche Forschungsgemeinschaft (DFG, German Research Foundation) - 235950644 (Project GI 274/6-2).
\end{acks}

\clearpage

\printbibliography

\begin{appendix}
  \begin{appendix}
\appendixproofsection{Missing Proofs}\label{sec:MissingProofs}
\appendixproof*{lem:preproc}
\appendixproof*{lem:symmetry}
\appendixproof*{lem:bounding}
\end{appendix}

\end{appendix}

\end{document}